\documentclass[12pt,a4paper]{article}
\textheight      = \paperheight
\advance\textheight by -2truein
\textwidth       = \paperwidth
\advance\textwidth  by -2truein
\topmargin       = -18pt
\oddsidemargin   =   0pt
\evensidemargin  =   0pt

\usepackage[T1]{fontenc}

\usepackage{indentfirst,eurosym}
\usepackage{a4wide}
\usepackage{amsmath,amsthm,amsfonts,amscd,amssymb,bbm,mathrsfs,enumerate,url}
\usepackage{color}

\usepackage{graphicx,tikz}
\usepackage[pdfborder={0 0 0}]{hyperref}
\usetikzlibrary{arrows}

\usepackage{ucs}
\usepackage[utf8x]{inputenc}

\def\RR{{\mathbb R}}
\def\CC{{\mathbb C}}
\def\NN{{\mathbb N}}
\def\ZZ{{\mathbb Z}}

\def\a{{\mathcal A}}
\def\b{{\mathcal B}}
\def\h{{\mathcal H}}
\def\d{{\mathcal D}}
\def\diff{{{\rm Diff}^+ (S^1)}}
\def\diffI{{{\rm Diff}^+ (I)}}

\def\Vect{{{\rm Vect}}}

\def\exp{{\rm Exp}}
\def\MM{{\mathbb M}}
\def\mob{{\mathrm{M\ddot{o}b}}}
\def\A{{\mathcal A}}
\def\C{{\mathcal C}}
\def\I{{\mathcal I}}
\def\K{{\mathcal K}}
\def\M{{\mathcal M}}

\def\N{{\mathcal N}}
\def\P{{\mathcal P}}
\def\QQ{{\mathbb Q}}
\def\R{{\mathcal R}}
\def\T{{\mathbb T}}
\def\U{{\mathcal U}}
\def\g{\gamma}

\def\s{\sigma}

\def\Exp{{\mathrm{Exp}\,}}

\def\<{\langle}
\def\>{\rangle}
\def\Ad{{\hbox{\rm Ad\,}}}
\def\Tr{{\hbox{\rm Tr\,}}}
\def\supp{{\rm supp\,}}
\def\dom{{\rm Dom\,}}
\def\1{{\mathbbm 1}}
\def\cyl{\mathcal{E}}
\def\conf{\mathrm{Conf}(\mathcal{E})}
\def\cprime{$'$} 

\newtheorem{theorem}{Theorem}[section]
\newtheorem{definition}[theorem]{Definition}
\newtheorem{corollary}[theorem]{Corollary}
\newtheorem{proposition}[theorem]{Proposition}
\newtheorem{lemma}[theorem]{Lemma}
\theoremstyle{remark}
\newtheorem{remark}[theorem]{Remark}

\begin{document}
\title{Conformal covariance and the split property}
\author{
{\bf Vincenzo Morinelli}\footnote{Supported by the ERC advanced grant 669240 QUEST ``Quantum Algebraic Structures and Models''.},\quad
{\bf Yoh Tanimoto}\footnote{Supported by the JSPS fellowship for research abroad.}\\
Dipartimento di Matematica, Universit\'a di Roma Tor Vergata\\
Via della Ricerca Scientifica, 1, I-00133 Roma, Italy\\
email: {\tt morinell@mat.uniroma2.it},\quad {\tt hoyt@mat.uniroma2.it}
\vspace{0.5cm}\\
{\bf Mih\'aly Weiner}\footnote{Supported in part by the ERC advanced grant 669240 QUEST ``Quantum Algebraic Structures and Models'' and by OTKA grant no. 104206.}\\
Mathematical Institute, Department of Analysis, \\
Budapest University of Technology \& Economics (BME)\\
M\H{u}egyetem rk. 3-9, H-1111 Budapest, Hungary\\
email: {\tt mweiner@math.bme.hu}
}
\date{}
\maketitle

\begin{abstract}
%
We show that for a conformal local net of observables on the circle, the {\it split} property is automatic.
Both {\it full} conformal covariance (i.e.\! diffeomorphism covariance) and the circle-setting play essential roles
in this fact, while by previously constructed examples it was already known that even on the circle,
M\"obius covariance does {\it not} imply the split property.

On the other hand, here we also provide an example of a local conformal net living on the $2$-dimensional Minkowski space,
which --- although being diffeomorphism covariant --- does not have the split property.
\end{abstract}

\section{Introduction}\label{introduction}
More than half a century passed away since the first formulation of an {\it axiomatic} quantum field theory. There are several existing different settings (differing e.g.\! on the chosen spacetime, or whether their fundamental notion is that of a {\it quantum field}
or a {\it local observable}) with many ``additional'' properties that are sometimes included among the defining axioms. For an introduction and overview of the topic we refer to the book of Haag \cite{Haag96}.

Whereas properties like {\it locality} are unquestionably among the basic axioms, some other properties are less motivated and accepted. {\it Haag-duality} has an appealing mathematical elegance, but there seems to be no clear physical motivation for that assumption.  Technicalities, like the {\it separability} of the underlying Hilbert space are sometimes required with no evident physical reason. 

The {\it split property} is the statistical independence of local algebras associated to regions with a positive (spacelike) separation. It might be viewed as a stronger version of locality, and contrary to the previous two examples, it was formulated on direct physical grounds. However, traditionally it is not included among the defining axioms, as in the beginning it was unclear how much one can believe in it. Indeed, many years passed till this stronger version of locality was first established at least for the massive free field by Buchholz \cite{Buchholz74}.
Only after the introduction of  the nuclearity condition (which was originally motivated by the need of a particle interpretation \cite{HS65}) it became more of a routine to verify the split property in various models, when its connection to nuclearity was discovered \cite{BW86}.
Another important step was the general mathematical understanding of split inclusions brought by the work by Doplicher and Longo \cite{DL}.

In the meantime, interest rose in conformal quantum field theories, especially in the low dimensional case; i.e.\! conformal models given on the 2-dimensional Minkowski space and their chiral components that can be naturally extended onto the compactified lightray, the circle. The theory of conformal net of local algebras on $S^1$ is rich in examples 
and it provides an essential ``playground'' to people studying operator algebras as it turned out to have incredibly deep connections to the modular theory of von Neumann algebras as well as to subfactor theory; see e.g.\! \cite{FrG,Wiesbrock93-2,KLM01}. In particular, the modular group associated to a local algebra and the vacuum vector always acts in a certain geometric manner: the so-called {\it Bisognano-Wichmann property} is automatic. In turn, this was used to conclude that several further important structural properties --- e.g. {\it Haag-duality} and {\it Additivity} --- are also automatic in this setting. We refer to the original works \cite{FrG,FJ,BGL93,BDL} for more details on this topic.

The case of the split property seemed to be different --- but there is an important detail to mention here. Initially, when studying chiral conformal nets, only M\"obius covariance was exploited in the so-far cited works. There were several reasons behind this choice.
First, M\"obius symmetry is the spacetime symmetry implemented by a unitary representation for which the vacuum is an invariant vector. This is exactly how things go in higher dimension, but this is {\it not} how diffeomorphism covariance is implemented (no invariant vectors and one is forced to consider {\it projective} representations rather than true ones).
Second, the mentioned connection to modular theory of von Neumann algebras relies on M\"obius covariance only. Thus, the listed structural properties --- with the exception of the split property --- are already automatic even if diffeomorphism covariance is not assumed.      

From the physical point of view, however, diffeomorphism covariance is natural in the low dimensional conformal setting; by an argument of L\"uscher and Mack, it should merely be a consequence of the existence of a stress-energy tensor \cite{FST89}. All important models are diffeomorphism covariant with the exception of some ``pathological'' counter-examples; see \cite{Koester03, CW}. It is worth noting that the example constructed in \cite{CW} by infinite tensor products, has {\it neither} diffeomorphism symmetry {\it nor} the split property. Thus, unlike the mentioned other properties, the split property surely cannot be derived in the M\"obius covariant setting. However, as we shall prove it here, the split property is automatic if diffeomorphism covariance is assumed. Note that together with the result of Longo and Xu in \cite{LX} regarding strong additivity, this shows that a diffeomorphism covariant local net on $S^1$ is {\it completely rational} if and only if its $\mu$-index is finite.

The crucial points of our proof are the following. We consider a conformal net $\A$ on the circle with conformal Hamiltonian $L_0$, and fix two (open, proper) intervals $I_a,I_b\in \I$ with positive distance from each other. Inspired by the complex analytic argument used in \cite{FJ} to prove the conformal cluster theorem, for an element 
$X$ of the $*$-algebra $\a(I_a)\vee_{\mathrm{alg}}\a(I_b)$ generated by $\a(I_a)$ and $\a(I_b)$
with decomposition $X=\sum_{k=1}^n A_k B_k$ (where $n\in\NN,\, A_k\in \a(I_a),\, B_k\in \a(I_b)$), we consider the function 
on the complex unit disc 
\begin{equation*}
z\mapsto \sum_{k=1}^n \langle\Omega, A_k z^{L_0} B_k\Omega\rangle.
\end{equation*}
For every $|z|\leq 1$, this defines a functional $\phi_z$ on $\a(I_a)\vee_{\mathrm{alg}}\a(I_b)$.
For $z=1$ this is simply the vacuum state $\omega$, but for $z=0$ this is the product vacuum state $AB\mapsto \omega(A)\omega(B)$ $(A\in \a(I_a),\, B\in \a(I_b))$. The split property is essentially equivalent to saying that $\phi_0$ is normal 
(actually, here some care is needed: in general one needs the product state to be normal {\it and} faithful. Fortunately, general results on normality and conormality in a M\"obius covariant net \cite{GLW} of the inclusions $\a(I_1)\subset \a(I_2)$ for an $I_1\subset I_2$ imply that $\a(I_a)\vee \a(I_b)$ is a factor; see more details in the preliminaries. It then turns out that the normality of $\phi_0$ is indeed equivalent to the split property).

However, we do not have a direct method to show that $\phi_z$ is normal at $z=0$. On the other hand, we {\it can} treat several points inside the disc. Using the positive energy projective representation $U$ of $\diff$ given with the theory, for example for any  (fixed) $r\in (0,1)$ and $I_c,I_d\in \I$ covering the full circle we find a decomposition $r^{L_0} =C D$ in which $C_r\in \a(I_c)$ and $D_r\in \a(I_d)$.
Choosing the intervals $I_c$ and $I_d$ carefully, $C$
will commute with the $A_k$ operators while $D$ will commute with the $B_k$ operators and hence 
\begin{eqnarray*}
\nonumber
\phi_r(X) &=& \sum_{k=1}^n \langle\Omega, A_k r^{L_0} B_k\Omega\rangle = 
\sum_{k=1}^n \langle\Omega, A_k C D B_k\Omega\rangle \\
&=& \sum_{k=1}^n \langle C^*\Omega, A_k B_kD\Omega\rangle =
\langle C^*\Omega,\, X\,D\Omega\rangle 
\end{eqnarray*}
showing that for our real $r\in (0,1)$, the functional $\phi_r$ is normal as it is given by two vectors. Note that the origin of the decomposition $r^{L_0} =C D$ is the fact that
a rotation can be decomposed as a product of {\it local} diffeomorphisms; something that using M\"obius transformations alone, cannot be achieved (as all nontrivial M\"obius transformations are global). However, even using the full diffeomorphism group, the issue is tricky, since we need a decomposition that can be analytically continued over to some imaginary parameters --- and of course the words ``local'' and ``analytical'' are usually in conflict with each other. Nevertheless, this kind of problem was already treated in 
\cite{weiner}, and the methods there developed were also used in the proof of \cite[Theorem 2.16]{CCHW}, so all we needed here was some adaptation of earlier arguments. 

We then proceed by ``deforming'' our decomposition using the work \cite{Ols} of Olshanskii,
which allows us to access further regions inside the unit disk. In this way we establish normality along a ring encircling the origin, and thus we can use the Cauchy integral formula to conclude normality of $\phi_z$ at $z=0$. 

Note that we have really made use of the fact that the conformal Hamiltonian $L_0$ generates a compact group. Indeed, for a generic complex number $z$, the very expression $z^{L_0}$ is meaningful only because ${\rm Sp}(L_0)$ contains integer values only. However, unlike with chiral nets, in the $2$-dimensional conformal case the theory does not necessarily extends in a natural way to the compactified spacetime. Thus one might wonder whether our result will remain valid or not: is this compactness of the spacetime just some technicality, or is it an essential ingredient of our proof? The answer turns out to be the latter one. 

In fact, we manage to present an example of a diffeomorphism covariant local net on the $2$-dimensional spacetime,
which does not have the split property. More concretely, we consider a local extension $\tilde \a\supset \a$ of
the net $\a=\a_{U(1)}\otimes \a_{U(1)}$ obtained by taking two copies of the $U(1)$-current net (here considered
as ``left'' and ``right'' chiral parts). Irreducible sectors of the $U(1)$-current net are classified by a certain charge $q\in \RR$.
Our construction is such that when considered as a representation of $\a_{U(1)}\otimes \a_{U(1)}$, the net $\a\subset \tilde \a$
decomposes as a direct sum  $\oplus_{q\in \QQ}\, (\s_q\otimes\s_q)$ where $\s_q$ is the representation corresponding to the sector with charge $q$.
This model is naturally diffeomorphism covariant, but as it violates the modular compactness \cite{BDL90},
it cannot have the split property. Note that here ``diffeomorphism covariance'' means only that
we have an action of $\widetilde{\diff}\times \widetilde{\diff}$ which factors through the spacelike $2\pi$-rotation,
but not that of $\diff\times \diff$. This is in complete accordance with our earlier remark on the spectrum of $L_0$.
It is also possible to replace $\QQ$ with $\RR$ to obtain a diffeomorphism covariant net
on a non-separable Hilbert space, and it is immediate to show that it does not have the split property.

This paper is organized as follows.
In Section \ref{preliminaries} we introduce our operator-algebraic setting for conformal field theory and
recall relevant technical results concerning conformal covariance and the split property.
Sections \ref{localdecomp} and \ref{furtherdecomp} provide our technical ingredients, namely
certain decompositions of $z^{L_0}$ into local elements.
In Section \ref{normality} we prove our main result, that the split property follows from diffeomorphism covariance,
by proving the normality of $\phi_0$.
A two-dimensional counterexample is provided in Section \ref{non-split}.
In Section \ref{outlook} we conclude with open problems.

\section{Preliminaries}\label{preliminaries}

Let $\I$ be the set of nonempty, nondense, open connected intervals of the unit circle $S^1= \{z \in \CC : |z|=1\}$.
A \textbf{M\"obius covariant net} is a map $\a$ which assigns to every interval of the circle $I\in\I$ a von Neumann algebra $\a(I)$ acting on a fixed Hilbert space $\h$  satisfying the following properties:
\begin{enumerate}
\item {\sc Isotony:} if $I_1,I_2\in\I$ and $I_1\subset I_2$, then  $\a(I_1)\subset\a (I_2)$;
\item {\sc M\"obius covariance:} there exists a strongly continuous, unitary representation $U$ of
the  M\"obius group $\mob$ ($\simeq \mathrm{PSL(2,\RR)}$) on $\h$ such that $$U(g)\a(I)U(g)^*=\a(gI),\qquad I\in\I, \, g\in\mob;$$
\item {\sc Positivity of the energy:}  the \textbf{conformal Hamiltonian} $L_0$, i.e.\! the generator of the rotation one-parameter subgroup 
has a non negative spectrum. 
\item {\sc Existence and uniqueness of the vacuum:} 
there exists a unique (up to a phase) unit $U$-invariant vector $\Omega\in\h$, i.e.\! $U(g)\Omega=\Omega$ for $g \in \mob$;
\item {\sc Cyclicity:} $\Omega$ is cyclic for the von Neumann algebra $\bigvee_{I\in\I}A(I)$.
\item {\sc Locality:}  if $I_1,I_2\in\I$ and $I_1\cap I_2=\emptyset$, then $\a(I_1)\subset\a(I_2)'$.
\end{enumerate}

We will denote a M\"obius covariant net with the triple $(\a,U,\Omega)$. 
Some consequences of the axioms are (see e.g. \cite{FrG,FJ,GL}): 
\begin{itemize}
\item[7.] {\sc Reeh-Schlieder property}: $\Omega$ is a cyclic and separating vector for each $\a(I)$, $I\in\I$; 
\item[8.]{\sc Haag duality}: $\a(I')'=\a(I)$, where  $I\in\I$ and $I'$ is the interior of $S^1\backslash I$; 
\item[9.]{\sc Bisognano-Wichmann property}: $U(\delta_I(-2\pi t))=\Delta^{it}_{\a(I),\Omega}$ where $\delta_I$ is the dilation subgroup associated to the interval $I$ and $\Delta_{\a(I),\Omega}^{it}$ is the modular group of $\a(I)$ with respect to $\Omega$; 
\item[10.]{\sc  Irreducibility}: $\bigvee_{i\in\I}\a(I)=\b(\h)$;  
\item[11.]{\sc Factoriality}: algebras $\a(I)$ are type $\mathrm{III}_1$ factors;
\item[12.] {\sc Additivity}: let $\{I_\kappa\}\subset\I$  be a covering of $I$, namely $I \subset \bigcup_\kappa I_\kappa$,
then $\a(I)\subset\bigvee_\kappa\a(I_\kappa)$.
\end{itemize}

The following seems relatively less known, yet it follows from M\"obius covariance and
has an important implication \cite[Theorem 1.6]{GLW}.
\begin{itemize}
\item[13.]{\sc Normality and conormality}: for any inclusion $I_1 \subset I_2$,
it holds that $\a(I_1) = \a(I_2)\cap \left(\a(I_1)'\cap \a(I_2)\right)'$ and
$\a(I_2) = \a(I_1)\vee \left(\a(I_1)'\cap\a(I_2)\right)$ .
\end{itemize}
From conormality, it follows that two-interval algebras are factors.
Indeed, take $I_1 \subset I_2$ such that they have no common end points.
Then $I_1$ and $I_2'$ are disjoint intervals with a finite distance.
By Haag duality it follows that $(\A(I_1) \vee \A(I_2'))' = A(I_1)' \cap \A(I_2)$,
and by conormality we have
\begin{align*}
 \left(\A(I_1) \vee \A(I_2')\right) \bigvee \left(\A(I_1) \vee \A(I_2')\right)'
 &= \A(I_1) \vee \A(I_2') \vee \left(\A(I_1)' \cap \A(I_2)\right) \\
 &= \A(I_2) \vee \A(I_2') = \b(\h),
\end{align*}
where the last equality is a consequence of Haag duality and factoriality.
Let us add this to the list of consequences.
\begin{itemize}
\item[14.]{\sc Factoriality of two-interval algebras}: for disjoint intervals $I_1$ and $I_2$ with a finite distance,
$\a(I_1)\vee\a(I_2)$ is a factor.
\end{itemize}

Now, we briefly discuss diffeomorphism covariance. Let $\diff$ be the group of orientation preserving diffeomorphisms of the circle. It is an infinite dimensional Lie group modelled on the real topological vector space $\Vect(S^1)$ of smooth real vector fields on $S^1$ with the $C^\infty$-topology \cite{Milnor}. 
Its Lie algebra has to be considered with the negative of the usual bracket on vector fields, in order to have the proper exponentiation of vector fields. We shall identify the vector field $f(e^{i\theta})\frac d{d\theta}\in \Vect(S^1)$ with the corresponding real function $f\in C^\infty(S^1,\RR)$. 
We  denote with $\diffI$ the subgroup of $\diff$ acting identically on $I'$,
namely the diffeomorphisms of $S^1$ with support included in $\overline I$.

A strongly continuous, \textbf{projective unitary representation} $U$ of $\diff$ on a Hilbert space $\h$ is a strongly continuous homomorphism of $\diff$ into $\U(\h)\slash\T$, the quotient of the group of unitaries in $\b(\h)$ by $\T$. %
The restriction of $U$ to $\mob\subset\diff$ always lifts to a unique strongly continuous unitary representation
of the universal covering group $\widetilde{\mob}$ of $\mob$. $U$ is said to have \textbf{positive energy},
if the generator $L_0$ of rotations, the conformal Hamiltonian, has a nonnegative spectrum in this lift. 
Let $\gamma\in\diff$. 
Note that expressions $\Ad U(\gamma)$ makes sense as an action on $\b(\h)$.
We also write $U(\gamma) \in \M$ although $U(\gamma)$ is defined only up to a scalar.

When one has a strongly continuous projective unitary representation $U$ of $\diff$ with positive energy,
one can differentiate it to obtain the Lie algebra \cite[Appendix A]{Carpi2} (see also \cite{Loke}).
Any smooth function $f\in C^\infty(S^1,\RR)$, as vector fields on $S^1$,
defines the one parameter group of diffeomorphism $\RR\ni t\mapsto \gamma_t \dot= \exp (tf)\in\diff$,
hence, up to an additive constant, defines the self-adjoint generator $T(f)$ of the unitary group
$t\mapsto U(\gamma_t) $. For any real smooth function $f$ as above, $T(f)$ is essentially self-adjoint
on the set $C^\infty(L_0) := \bigcap_{n\in\NN_0} \dom(L_0^n)$.
$T$ shall be called the \textbf{stress energy tensor}.

Irreducible, projective, unitary positive energy representation of $\diff$ are labelled by certain values of the \textbf{central charge} $c>0$ and the \textbf{lowest weight} $h\geq0$. $h$ is the lowest point in the discrete spectrum of the conformal Hamiltonian $L_0$. 
There is a unique (up to a phase) vector $\Phi\in\h$ corresponding to the lowest eigenvalue. See \cite{GoWa, kac} for a detailed description of such representations.

One considers particular elements $\{L_n:n \in\mathbb{Z}\}$,
$L_n = iT(y_n) - T(x_n), L_{-n} = iT(y_n) + T(x_n)$ for $n \in \mathbb{N}$, where $x_n(\theta) := -\sin n\theta$ and $y_n(\theta) := - \cos n\theta$
(there is a canonical way to fix the scalar part of $T(x_n), T(y_n)$, as $L_n, L_{-n}$ and $L_0$
generate a (projective) representation of $\widetilde{\mob}$).
These operators satisfy the so-called Virasoro algebra on the linear span $\d_{\mathrm{fin}}$ of the eigenspaces of $L_0$.
In particular for all $n,m\in\ZZ$: $\d_{\mathrm{fin}}$ is an invariant common core for any closed operator $L_n$; if $n>0$ then $L_n\Phi=0$; $L_{-n}\subset L_n^*$; the family $\{L_n\}_{n\in\ZZ}$ satisfies the Virasoro algebra relations on $\d_{\mathrm{fin}}$:
\[
[L_n,L_m]=(n-m)L_{n+m}+\frac{c}{12}(n^3-n)\delta_{-m,n}{\1}. 
\]

Let $f\in C^{\infty}(S^1,\RR)$ be a  vector field on $S^1$, with Fourier coefficients
$$\hat f_n=\frac1{2\pi}\int_{0}^{2\pi}f(\theta)e^{-in\theta}d\theta,\qquad n\in\ZZ,$$
then, one can recover the stress-energy tensor by
\begin{equation}\label{eq:set}
T(f) = \overline{\sum_{n\in \ZZ} \hat{f}_n L_n}
\end{equation}
and
\[
e^{iT(f)}=U(\exp(f)) 
\]
gives the correspondence between the infinitesimal generators and the representation of $\diff$ (up to a scalar).

Throughout the next few sections we shall often consider the net of von Neumann algebras
\begin{equation*}
\a_U(I)=\{e^{iT(f)}|\, f\in C^\infty(S^1,\RR),\, {\rm supp}(f)\subset I\}''
\qquad (I\in \I).
\end{equation*}
Note that when $U$ is a so-called vacuum representation associated to central charge $c$, $\a_U$ is nothing else than the well-known Virasoro net with central charge $c$.
$\a_U(I_1)$ commutes with $\a_U(I_2)$ if $I_1 \cap I_2 = \emptyset$.

The stress energy tensor can be evaluated on a larger set of functions \cite{CW}. For a continuous
function $f:S^1\to \RR$ with Fourier coefficients $\{\hat{f}_n\}_{n\in\ZZ}$ we shall set
\begin{equation*}
\|f\|_{\frac{3}{2}}= \sum_{n\in \ZZ}|\hat{f}_n|\left(1+ |n|^\frac32\right).
\end{equation*}
Then $\|\cdot\|_{\frac{3}{2}}$ is a norm on the space 
$\{f\in C(S^1,\RR)| \, \|f\|_{\frac{3}{2}}<\infty \}$.
By \cite{CW}, if $f\in C(S^1,\RR)$ with $\|f\|_{\frac{3}{2}}<\infty$, then $T(f)$,
 defined as in \eqref{eq:set}, is self-adjoint and moreover if $f_k\to f$ in the norm $\|\cdot\|_{\frac{3}{2}}$,
then $T(f_k)\to T(f)$ in the strong resolvent sense. In particular,
even for a non necessarily smooth function $f$ with $\|f\|_{\frac{3}{2}} < \infty, \supp f  \subset I$,
the self-adjoint $T(f)$ is still affiliated to $\a_U(I)$.

 
We shall say that  a M\"obius covariant net  $(\a,U,\Omega)$ is \textbf{conformal} (or \textbf{diffeomorphism covariant}) if the $\mob$ representation $U$ extends to a projective unitary representation $\diff\to \U(\h)\slash\T$ of $\diff$ (that with a little abuse of notation we continue to indicate the extension with $U$) and satisfying 

\begin{itemize}
\item $ \Ad U(\gamma)(\a(I)) = \a(\gamma I),\; \text{ for } \gamma\in\diff$
\item $ \Ad U(\gamma)(x) = x, \; \text{ for } \gamma\in\diffI, x\in\a(I')$
\end{itemize}

Even for a non necessarily smooth function $f$ with $\|f\|_{\frac{3}{2}} < \infty, \supp f  \subset I$,
the self-adjoint $T(f)$ is still affiliated to $\a_U(I)$.
 
Now we recall the definition of the split property for von Neumann algebra inclusions and conformal nets.
\begin{definition}\label{def:split}
Let $(\N\subset \M,\Omega)$ be an \textbf{standard inclusion} of von Neumann algebras, i.e.\! $\Omega$ is a cyclic and separating vector for $N,\, M$ and $N'\cap M$.

A standard inclusion $(\N\subset \M,\Omega)$ is \textbf{split} if there exists a type I factor $\R$ such that $\N\subset \R \subset \M$.

A M\"obius covariant net $(\a,U,\Omega)$ satisfies the \textbf{split property} if  the von Neumann algebra inclusion $\a(I_1)\subset\a(I_2)$ is split, for any inclusion of intervals $I_1\Subset I_2$,
namely when $I_1$ and $I_2$ have no common end points.
\end{definition}
The following proposition provides an equivalent condition to the split property.
 Although similar statements are quite well-known to experts (see \cite{DL83} and \cite[Below Definition 1.4]{DL}),
 the precise assumptions we need are difficult to find in the literature
 (note, for example, that we do not assume neither the separability of the underlying Hilbert space
 \footnote{If the Hilbert spaces are not separable, several well-known statements no longer hold.
 For example, an isomorphism between type III algebras might be not a unitary equivalence.}
 nor the faithfulness of the split state in the implication $2 \Rightarrow 1$ below).
\begin{proposition}\label{prop:split}
Let $(\N\subset \M,\Omega)$ be a standard inclusion of von Neumann algebras.
We further assume that that $\N\vee \M'$ is a factor.
Then the following are equivalent.
\begin{enumerate}
\item $\N\subset \M$ is split; 
\item there exists a normal state $\phi$ on $\N\vee \M'$ such that
the restrictions $\phi_\N$ and $\phi_{\M'}$ are faithful and
$\phi$ is split, namely,
\[
 \phi(xy)=\phi(x)\phi(y),\qquad x\in \N,y\in \M'.
\]
\end{enumerate}
\end{proposition}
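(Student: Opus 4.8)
The plan is to prove the two implications separately, using the standard theory of split inclusions (Doplicher--Longo) together with the extra factoriality hypothesis $\N\vee\M'$ to control the delicate normality issues.

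\textbf{Proof of $1\Rightarrow 2$.} Suppose $\N\subset\M$ is split, so there is a type I factor $\R$ with $\N\subset\R\subset\M$. First I would recall that a split inclusion whose relative commutant contains a cyclic and separating vector is automatically a \emph{standard split inclusion} in the sense of \cite{DL}, so the canonical theory applies: there is a natural isomorphism of $\N\vee\M'$ onto the tensor product $\N\overline{\otimes}\,\M'$ sending $xy\mapsto x\otimes y$ for $x\in\N$, $y\in\M'$. The point is that because $\R$ is type I, the map $\N\times\M'\to\b(\h)$, $(x,y)\mapsto xy$, extends to a normal isomorphism $\N\overline{\otimes}\,\M'\cong \N\vee\M'$; the type I factor $\R$ is exactly what splits the two commuting algebras into a genuine tensor product. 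Then I would produce the state $\phi$ by transporting a product state across this isomorphism: pick faithful normal states $\psi_\N$ on $\N$ and $\psi_{\M'}$ on $\M'$ (these exist since, $\Omega$ being cyclic and separating, the vector states $\omega_\Omega$ restricted to $\N$ and to $\M'$ are already faithful and normal), and set $\phi := (\psi_\N\otimes\psi_{\M'})\circ\iota^{-1}$ where $\iota\colon \N\vee\M'\to \N\overline{\otimes}\,\M'$ is the above isomorphism. This $\phi$ is normal, its restrictions are the chosen faithful states, and by construction it factorizes on products, so it satisfies the split condition in $2$.

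\textbf{Proof of $2\Rightarrow 1$.} Conversely, given a normal split state $\phi$ on $\N\vee\M'$ with faithful restrictions, I would use it to construct a type I factor $\R$ interpolating $\N$ and $\M'$. The factorization $\phi(xy)=\phi(x)\phi(y)$ says precisely that $\N$ and $\M'$ are statistically independent in the state $\phi$, and normality lets us pass to the GNS representation $\pi_\phi$ of $\N\vee\M'$ on a Hilbert space $\h_\phi$ with cyclic vector $\xi_\phi$. Because $\phi$ restricts faithfully to each of $\N$ and $\M'$, the GNS representation is faithful (hence a normal isomorphism onto its image) on each factor, and the product form of $\phi$ implies that $\pi_\phi(\N)$ and $\pi_\phi(\M')$ generate, in the state $\omega_{\xi_\phi}$, a tensor product situation: $\xi_\phi$ factorizes as a product vector and $\pi_\phi(\N\vee\M')\cong\pi_\phi(\N)\overline{\otimes}\pi_\phi(\M')$. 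Here the hypothesis that $\N\vee\M'$ is a \emph{factor} is essential, as it guarantees that $\pi_\phi$ is quasi-equivalent to the identity representation --- this is where we avoid having to assume separability, since for a factor any two normal faithful representations are quasi-equivalent regardless of cardinality of the underlying Hilbert space. Pulling the canonical type I factor of the tensor product (the $\b(\cdot)\overline{\otimes}\,\1$ subalgebra) back through this quasi-equivalence yields a type I factor $\R$ with $\N\subset\R$ and $\R\subset(\M')'=\M$ by Haag duality / double commutant, giving the split property.

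\textbf{Main obstacle.} I expect the hard part to be the implication $2\Rightarrow 1$, and specifically the step where normality and faithfulness of $\phi$ must be leveraged \emph{without} assuming separability of $\h$. The naive argument would invoke the isomorphism between $\N\vee\M'$ and $\N\overline{\otimes}\,\M'$ and then quote that any isomorphism of von Neumann algebras is spatial; but as the footnote warns, this fails for type III algebras on non-separable spaces, so one cannot simply declare the GNS representation unitarily equivalent to the identity. The resolution is to work with \emph{quasi-equivalence} rather than unitary equivalence, using the factoriality of $\N\vee\M'$ to ensure that the normal isomorphism produced by $\phi$ is a quasi-equivalence (equivalently, that $\pi_\phi$ and the identity representation have equivalent central supports, which is automatic for a factor). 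One then transports the type I subfactor along the quasi-equivalence, checking carefully that the interpolation property $\N\subset\R\subset\M$ survives. Verifying that the faithfulness of the two restrictions is exactly what forces $\pi_\phi$ to be faithful on each side --- and hence that the reconstructed $\R$ genuinely sits between $\N$ and $\M$ --- is the technical heart of the argument.
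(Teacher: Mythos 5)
Your direction $1\Rightarrow 2$ is fine and matches the paper. The genuine gap is in $2\Rightarrow 1$, at the final step, and it is exactly the step your ``main obstacle'' paragraph resolves in the wrong direction. A quasi-equivalence between $\pi_\phi$ and the identity representation is nothing more than a normal isomorphism $\N\vee\M' \to \pi_\phi(\N\vee\M') = \pi_\phi(\N)\otimes\pi_\phi(\M')$ intertwining the two representations; it is defined \emph{only on that image algebra}. The type I factor you want to pull back, $\b(\h_{\phi,1})\otimes\CC\1$, is \emph{not} contained in $\pi_\phi(\N)\otimes\pi_\phi(\M')$ --- it would be only if $\pi_\phi(\N)$ were all of $\b(\h_{\phi,1})$, which fails badly in the intended application where $\N$ is type III. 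So ``pulling the canonical type I factor back through the quasi-equivalence'' is simply not defined, and no abstract isomorphism can do the job: an intermediate type I factor is a spatial object sitting inside $\b(\h)$ between $\N$ and $\M$, and to transport $\b(\h_{\phi,1})\otimes\CC\1$ from $\b(\h_\phi)$ down to $\b(\h)$ you need a unitary $W\colon\h\to\h_\phi$ implementing $\pi_\phi$, so that one may set $\R := W^*\left(\b(\h_{\phi,1})\otimes\CC\1\right)W$; then $\N\subset\R$ because $W\N W^*=\pi_\phi(\N)\otimes\CC\1 \subset \b(\h_{\phi,1})\otimes\CC\1$, and $\R\subset\M=(\M')'$ because $\R$ commutes with $\M'$.

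What you are missing --- and what is the actual technical heart of the paper's proof --- is that unitary implementability can be obtained \emph{without} separability by using standardness on both sides. Concretely: since $\phi$ is faithful and normal on $\N$ and on $\M'$, and each of these algebras has the cyclic and separating vector $\Omega$, their GNS representations are already unitary equivalences \cite[Corollary 10.15]{alapkonyv}; by the product form of $\phi$ the GNS space of $\phi$ factorizes, so $\pi_\phi(\N\vee\M')$ may be identified with $\N\otimes\M'$ acting on $\h\otimes\h$. Factoriality of $\N\vee\M'$ makes the normal representation $\pi_\phi$ injective, hence an isomorphism onto $\N\otimes\M'$. Finally, both $\N\vee\M'$ and $\N\otimes\M'$ possess cyclic and separating vectors ($\Omega$, resp.\ $\Omega\otimes\Omega$; note that $(\N\vee\M')'=\N'\cap\M$ and $\Omega$ is cyclic for $\N'\cap\M$ by standardness of the inclusion), and an isomorphism between von Neumann algebras each having a cyclic and separating vector is unitarily implemented. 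That unitary is what turns the abstract tensor splitting into an intermediate type I factor. So the correct resolution is not ``quasi-equivalence instead of unitary equivalence,'' but ``unitary equivalence obtained from standard vectors instead of from separability.''
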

\begin{proof}
 If $\N \subset \M$ is split, namely if there is an intermediate type I factor $\R \simeq \b(\mathcal{K})$,
 then $\N \vee \M'$ is isomorphic to $\N\otimes \M'$, from which the implication $1 \Rightarrow 2$ follows.

Conversely, let there be a split state as in $2$.
 First of all, as $\phi$ are faithful on $\N$ and $\M'$, their GNS representations $\pi_{\N}, \pi_{\M'}$ are faithful and have
 a cyclic and separating vector.
 Next, as $\phi$ is normal on $\N\vee \M'$, its GNS representation $\pi_{\N\vee \M'}$ is also normal. The Hilbert space supporting $\pi_{\N\vee \M'}$ is isomorphic to the closure of $\N\vee_{\mathrm{alg}}\M'$ w.r.t.\! the scalar product inherited by the normal state $\phi$ as $\langle x,y\rangle_\phi=\phi(x^*y)$. By the factorization assumption on $\phi$, the Hilbert space is the tensor product  $L^2(\N,\langle\cdot,\cdot\rangle_\phi)\otimes L^2(\M',\langle\cdot,\cdot\rangle_\phi)$ and the GNS representation $\pi_{\N\vee\M'}$ restricted to $\N$ and $\M'$ are of the form $\pi_\N\otimes \1$ and $\1\otimes \pi_{\M'}$, respectively. Furthermore, as both $\N$ and $\M'$ have a cyclic and separating vector $\Omega$, their GNS representations $\pi_{\N}, \pi_{\M'}$ are actually unitary equivalences \cite[Corollary 10.15]{alapkonyv}. As a consequence, by normality, we can assume that $\pi_{\N\vee \M'}(\N\vee \M') = \N\otimes \M'$.
%
 Furthermore, by assumption $\N \vee \M'$ is a factor, hence the GNS representation is an isomorphism.
 Now, both $\N\vee \M'$ and $\N\otimes \M'$ have a cyclic and separating vector
 ($\Omega$ and $\Omega\otimes \Omega$ respectively), therefore, the GNS representation is actually a unitary equivalence.
 Then the preimage $\R = \pi_{\N\vee \M'}^{-1}\left(\b(\h)\otimes \CC\1\right)$ gives the intermediate subfactor $\N \subset \R \subset \M$.
%
\end{proof}

\begin{remark}
The split property implies separability of the Hilbert space.
Indeed, if we have a standard split inclusion of von Neumann algebra on an Hilbert space $\h$, then $\h$ has to be separable:
$\Omega$ is a cyclic and separating vector for the intermediate type I factor $\R$.
By considering the cardinality of the basis, either $\R$ or $\R'$ must be isomorphic to $\b(\h)$ and
$\Omega$ defines a faithful vector state on it, hence $\b(\h)$ is $\s$-finite, which is only possible if $\h$ is separable. 
\end{remark}

\section{Local decompositions of \texorpdfstring{$e^{-\beta L_0}$}{e}}\label{localdecomp}

Let $U$ be a strongly continuous projective unitary representation $U$ of $\diff$ with positive energy
which extends a proper representation of $\mob$ \footnote{We shall mean by ``proper'' that
an object or a relation is defined including the phase, hence not only being projective.}.
In what follows, for a $\beta>0$, $r=e^{-\beta}$ and two open proper arcs (intervals) $I_c,I_d\in \I$ that cover the circle: $I_c\cup I_d= S^1$,
we shall find a decomposition
$e^{-\beta L_0} = r^{L_0} = C_r D_r$ with the bounded operators $C_r\in \a_U(I_c)$ and $D_r\in \a_U(I_d)$.
The main idea for producing such a decomposition was already
presented and exploited in \cite{weiner} and in the proof of \cite[Theorem 2.16]{CCHW}. Here we shall recall the essential points of the argument presented there and then adjust and refine it to our purposes. 
\begin{proposition}
\label{prop:dec1}
Let $I_c,I_d\in \I$ be two open proper arcs covering the circle: $I_c\cup I_d = S^1$.
Then there exist two norm-continuous families of operators 
$(0,1)\ni r\mapsto C_r \in\a_U(I_c)$ and
$(0,1)\ni r\mapsto D_r \in\a_U(I_d)$ such that
$$
r^{L_0}=C_r D_r \;\;\;\;\textrm{and}\;\;\;\; \|C_r\|,\|D_r\|\leq \frac{1}{r^q}
$$
where the exponent $q = \frac{c}{48}(N^2-1)$ with $N$ being a positive integer such that
$6\pi/N$ is smaller than the lengths of both arcs that are obtained by taking the intersection $I_c\cap I_d$
(note that $N$ must be at least $4$).  
\end{proposition}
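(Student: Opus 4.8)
The plan is to produce the decomposition $r^{L_0}=C_rD_r$ by writing $L_0$ as a sum of two operators, each of which is a stress-energy tensor $T(f)$ supported in one of the arcs $I_c$, $I_d$, and then controlling the resulting (non-commuting) factors by an analytic estimate. The starting observation is that the constant vector field $f\equiv 1$ generates the rigid rotation, so $T(1)=L_0$ (up to the canonical additive constant). I would choose a smooth partition of unity $f_c+f_d\equiv 1$ on $S^1$ with $\supp f_c\subset I_c$ and $\supp f_d\subset I_d$; then $L_0=T(f_c)+T(f_d)$ on $\d_{\mathrm{fin}}$, with $T(f_c)$ affiliated to $\a_U(I_c)$ and $T(f_d)$ affiliated to $\a_U(I_d)$ by the localization property of the stress-energy tensor recalled in the preliminaries. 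The naive hope would be that $r^{L_0}=r^{T(f_c)}r^{T(f_d)}$, but of course this fails because $T(f_c)$ and $T(f_d)$ do not commute; the whole difficulty is to repair this.

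To repair it I would follow the strategy of \cite{weiner} and of \cite[Theorem 2.16]{CCHW}: treat $C_r$ and $D_r$ as solutions of a differential equation in the parameter $\beta=-\log r$ (an interaction-picture / Dyson-series construction). Concretely, set $C_r=e^{-\beta L_0}\,W_\beta^{-1}$ where $W_\beta$ is defined so that $D_r=W_\beta$ is generated by the $\beta$-dependent, conjugated vector field $e^{\beta\,\ad L_0}$ applied to $T(f_d)$; one checks that $e^{-\beta L_0}$ acts geometrically on $T(f_d)$, rotating the support of $f_d$ and rescaling it, so that at each stage the generator stays localized inside $I_d$. The key point, exactly as in the cited works, is that under the adjoint action of $r^{L_0}$ the relevant vector fields remain controlled in a norm strong enough to apply the Carpi–Weiner bound \cite{CW}: one keeps everything inside the Banach space $\{\|\cdot\|_{3/2}<\infty\}$ so that the generators are self-adjoint and the time-ordered exponentials converge. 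This is where the arcs $I_c\cap I_d$ and the integer $N$ enter: one needs enough room (two overlap arcs each longer than $6\pi/N$) to carry out the geometric deformation while keeping supports inside the prescribed intervals, and the resulting Dyson-series estimate produces precisely the bound $\|C_r\|,\|D_r\|\le r^{-q}$ with $q=\frac{c}{48}(N^2-1)$.

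The main obstacle, as flagged already in the introduction, is the tension between \emph{locality} and \emph{analyticity}: I want factors that are genuinely localized in $I_c$ and $I_d$ (so that later they commute with the $A_k$ and $B_k$), yet I also want them to depend analytically enough on the parameter to admit continuation. The technical heart is therefore the norm estimate: bounding the Dyson series for $W_\beta$ uniformly in $\beta\in(0,\infty)$, which amounts to estimating iterated commutators/integrals of the conjugated stress-energy tensors in the $\|\cdot\|_{3/2}$-norm and tracking how the exponential growth of the rotated supports is compensated by the positivity of $L_0$. The explicit constant $q=\frac{c}{48}(N^2-1)$ should fall out of this estimate, with the $(N^2-1)$ reflecting the quadratic growth of the Fourier modes one must accommodate and the $c/48$ being the Virasoro central-charge factor; getting this constant sharp (and verifying $N\ge 4$ suffices) is the part that requires the careful adaptation of \cite{weiner} rather than a mere citation.

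Finally I would verify the two qualitative claims separately: norm-continuity of $r\mapsto C_r$ and $r\mapsto D_r$ on $(0,1)$ follows from continuity of the Dyson series in $\beta$, and the membership $C_r\in\a_U(I_c)$, $D_r\in\a_U(I_d)$ follows because each factor is a (strong-resolvent) limit of exponentials $e^{iT(g)}$ with $\supp g$ inside the respective arc, hence lies in the corresponding von Neumann algebra by \eqref{eq:virasoronet}. I expect the localization to be the cleaner half and the uniform norm bound to be the genuinely hard half of the argument.
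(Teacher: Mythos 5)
Your overall plan --- split $L_0$ into localized pieces via a partition of unity and then ``repair'' the non-commutativity by an interaction-picture construction --- founders on the very first step of the repair. The claim that $e^{-\beta L_0}$ ``acts geometrically on $T(f_d)$, rotating the support of $f_d$ \dots so that at each stage the generator stays localized inside $I_d$'' is false: only the \emph{unitary} group $e^{i\theta L_0}$ acts geometrically, by honest rotations. Conjugation by the \emph{semigroup} element $e^{-\beta L_0}$ is rotation by an imaginary angle, which is not a point transformation of $S^1$ at all and destroys localization. Concretely, since $[L_0,L_n]=-nL_n$, one has formally $e^{-\beta L_0}T(f_d)e^{\beta L_0}=\sum_n \hat{f}_n e^{\beta n}L_n$, and for a smooth $f_d$ supported in a proper arc the coefficients $\hat{f}_n$ can never decay exponentially (exponential decay would force $f_d$ to extend analytically to an annulus, hence it could not vanish on an open set). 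So $\sum_n|\hat{f}_n|e^{\beta |n|}\left(1+|n|^{3/2}\right)=\infty$ for every $\beta>0$: the interaction-picture generators $\Ad e^{tL_0}(T(f_d))$ do not exist even as quadratic forms, there is no Dyson series to estimate, and no choice of $N$ or appeal to the $\|\cdot\|_{3/2}$-norm can resuscitate it. This locality/analyticity clash is precisely the obstruction flagged in the introduction; your proposal assumes it away rather than circumventing it. (Also, \cite{weiner} and \cite[Theorem 2.16]{CCHW} do not use a Dyson expansion, so citing their ``strategy'' does not fill the gap.)

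The paper's actual route never conjugates localized generators by non-unitary elements. It embeds a copy of $\widetilde{\mob}$ via $H=\frac1N L_0+\frac{c}{24}\left(N-\frac1N\right)\1$, $L_\pm=\frac1N L_{\mp N}$ (the Longo--Xu trick), and invokes the exact identity $e^{-2sH}=e^{-\tanh(s/2)P}e^{-\sinh(s)\tilde P}e^{-\tanh(s/2)P}$ of \cite[Theorem 3.3]{BDL}, where $P=T(p)+b\1$ and $\tilde P=T(\tilde p)+b\1$ are built from the \emph{trigonometric polynomials} $p(z)=\frac1{4N}(2-z^N-z^{-N})$ and $\tilde p(z)=\frac1{4N}(2+z^N+z^{-N})$, which have only finitely many Fourier modes, so no divergence can arise. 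Since $p\ge 0$ vanishes to second order at $N$ equally spaced points, it cuts into pieces $p=p_1+\cdots+p_N$ with $\supp p_k\subset\overline{I_k}$ and $\|p_k\|_{3/2}<\infty$; each $P_k=T(p_k)+\frac bN\1$ is shown to be \emph{positive} (via spectral additivity over commuting factors and rotation covariance), so the $e^{-tP_k}$ are commuting local contractions and $e^{-\tanh(s/2)P}=\prod_k e^{-\tanh(s/2)P_k}$, similarly for $\tilde P$ with the shifted arcs $\tilde I_k$. The $6\pi/N$ hypothesis is used only to rearrange the resulting triple product, by locality, into a factor in $\a_U(I_c)$ times a factor in $\a_U(I_d)$. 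Finally, the bound $\|C_r\|,\|D_r\|\le r^{-q}$ with $q=\frac{c}{48}(N^2-1)$ is not the output of any hard estimate: it is just the additive central constant in $H$, via $e^{-2sH}=r^{L_0}r^{2q}$ for $r=e^{-2s/N}$, split evenly between the two contraction factors. So the part you flagged as ``the cleaner half'' (localization) is the real content of the proof, while the part you flagged as ``the genuinely hard half'' (a uniform series bound) does not occur in a correct argument --- nor could it, since the series in question diverges.
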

\begin{proof}
\begin{figure}[ht]
\centering
\begin{tikzpicture}[line cap=round,line join=round,>=triangle 45,x=1.0cm,y=1.0cm, scale = 0.8]
\clip(-5,-5) rectangle (5,5);
\draw [line width=3pt] (0,0) circle (2.5cm);
\draw  (0.694592711,3.939231012) arc (80:300:4) ;
\draw  (-0.392200842,-4.482876141) arc (265:480:4.5) ;
\draw  (0.278898377,3.187823034) arc (85:95:3.2) ;
\draw  (0,3.6) arc (90:100:3.6) ;

\draw (-3,3) node[anchor=north west] {$I_c$};
\draw (2.9,2.9) node[anchor=north west] {$I_d$};
\draw (0.33,3.35) node[anchor=north west] {$ I_k $};
\draw (0,3.94) node[anchor=north west] {$\tilde I_k$};
\end{tikzpicture}\caption{Intervals $I_c, I_d$ covering $S^1$ and $I_k, \tilde I_k$ with $N = 36$.}
\label{fig:decomp1}
\end{figure}
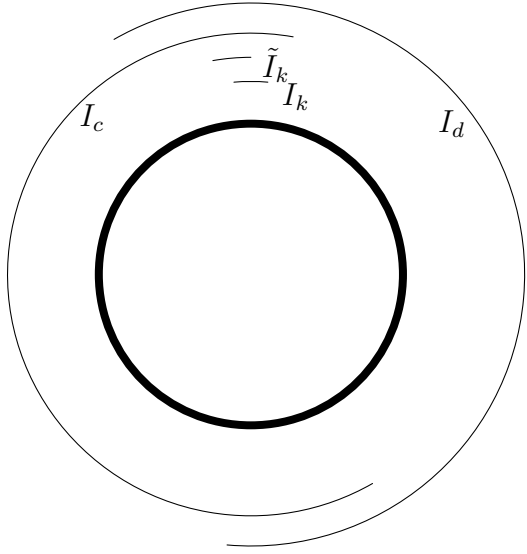

Let us fix a positive integer $N$ satisfying the condition of the proposition (see Figure \ref{fig:decomp1}).
The operators $H := \frac{1}{N}L_0+\frac{c}{24}(N-\frac{1}{N}) \1$, $L_+ := \frac{1}{N}L_{-N}$ and $L_- := \frac{1}{N}L_N$ satisfy the relations
\begin{equation*}
[H,L_\pm ] = \mp L_\pm, \;\;\;\; [L_-,L_+] = 2H,\;\;\;\; L_\pm\subset L_\mp^*.
\end{equation*}
Moreover, $H$ is diagonalizable with non-negative eigenvalues only, $L_\pm$ is defined on the span of the eigenvectors of $H$
which is also an invariant subspace for these operators. It then follows that these operators generate a strongly continuous, positive energy
unitary representation of the universal cover $\widetilde{\mob}$ of the M\"obius group. 
This construction --- both at the Lie algebra as well as the Lie group level --- was already considered and used by various authors;
see e.g.\! the work \cite{LX}.
In particular,
\begin{equation*}
P=\frac{1}{4}\overline{(2H-L_+ - L_-)}\;\;\; \textrm{and} \;\;\; \tilde{P}=\frac{1}{4}\overline{(2H+L_+ + L_-)}
\end{equation*} 
are conjugate to each other by the unitary operator
$e^{i\pi H}$,
with $P$ being the
self-adjoint generator of ``translations'' with spectrum ${\rm Sp}(P)= {\rm Sp}(\tilde{P})=\RR_+ \cup \{0\}$.
Moreover, by \cite[Theorem 3.3]{BDL} we have the relation
\begin{equation}\label{BDL_formula}
e^{-2sH} = e^{-{\rm tanh}(\frac{s}{2})P}e^{-{\rm sinh}(s)\tilde{P}}e^{-{\rm tanh}(\frac{s}{2})P}
\end{equation}
for all $s>0$. Let us now consider how $P$ and $\tilde P$ can be written in terms of the stress-energy $T$. We have
\begin{equation*}
P = \frac{1}{4N}\overline{(2L_0 - L_{-N}-L_N)} + \frac{c}{48}\left(N-\frac{1}{N}\right)\1 = T(p)+ b\1
\end{equation*}
and likewise
$\tilde{P} = T(\tilde{p})+ b\1$,
where 
\begin{equation*}
b = \frac{c}{48}\left(N-\frac{1}{N}\right)
\end{equation*}
and 
$p$ and $\tilde{p}$ are the functions defined by the formulas
$p(z)=\frac{1}{4N}(2-z^N-z^{-N})$ and
 $\tilde{p}(z)=\frac{1}{4N}(2+z^N+z^{-N})$. 
 
The function $p$ is nonnegative on $S^1$ and it has exactly $N$ points where its value is zero:
\begin{equation*}
p(z) = 0 \;\; \Longleftrightarrow  \;\; z = e^{i\frac{2\pi}{N} k}\;\; {\rm for}\;k=1,\ldots N. 
\end{equation*}
All these null-points are of course local (and also global) minima, where the derivative is zero.
We can thus ``cut'' $p$ into $N$ ``nice'' pieces: $p = p_1 + \ldots + p_N$ where the
support of the nonnegative function $p_k$ is the closure of the arc 
\begin{equation*}
I_k = \left\{e^{i\theta}: \frac{k-1}{N}<\frac{\theta}{2\pi}<\frac{k}{N} \right\},
\end{equation*}
and $\|p_k\|_{\frac{3}{2}}<\infty$. This latter follows from the fact that $p_k$ is once differentiable
and its derivative is of bounded variations; see the similar considerations at \cite[Lemma 5.3]{CW}. Thus for every $k=1,\ldots N$, 
\begin{equation*}
P_k = T(p_k) + \frac{b}{N}\1
\end{equation*}
is a well-defined self-adjoint operator affiliated to $\a_U(I_k)$
and we have 
$P=\overline{P_1 + \ldots +P_N}$. Since the terms in this 
decomposition are affiliated to commuting factors, just as in the proof \cite[Proposition 3.2]{weiner}, we have that 
\begin{equation*}
{\rm Sp}(P_1) + \ldots + {\rm Sp}(P_N)= {\rm Sp}(P) = \RR_+\cup\{0\}.
\end{equation*}
On the other hand, the spectrum of the operators $P_k$ $(k=1,\ldots N)$ must all
coincide, since using rotations one can easily show that they are all unitary conjugate to each other.
It then follows that each of them must be a positive operator. Thus for the bounded operator $e^{-{\rm tanh}(\frac{s}{2})P}$ appearing in formula \eqref{BDL_formula}, we have the {\it local} decomposition into a product of commuting bounded operators
\begin{equation*}
e^{-{\rm tanh}(\frac{s}{2})P} = \prod_{k=1}^N e^{-{\rm tanh}(\frac{s}{2})P_k}
\end{equation*}
where the norm of each term is smaller or equal than $1$.

Let us turn to $\tilde{P}$. As we have $\tilde{P} = \Ad e^{i\frac \pi N L_0}(P)$,
the {\it localization} of $\tilde{P}_k = \Ad e^{i\frac \pi N L_0}(P_k)$ are different from that of $P$: $\tilde P_k$ is affiliated to 
$\a_U(\tilde{I}_k)$ where $\tilde{I}_k = e^{i\frac \pi N} I_k$
and $\a_U$ is defined in Section \ref{preliminaries}
(we are considering the intervals as subsets in $\CC$).
With this localization, we can still assure the strong commutation between $P_k$ and $\tilde{P}_j$ whenever $k\neq j,j+1$ (mod $N$). 
So in the decomposition
\begin{eqnarray*}
\nonumber
e^{-2sH} &=& e^{-{\rm tanh}(\frac{s}{2})P}\;\; e^{-{\rm sinh}(s)\tilde{P}}\;\; e^{-{\rm tanh}(\frac{s}{2})P} \\
&=& \left(\prod_{k=1}^N e^{-{\rm tanh}(\frac{s}{2})P_k}\right) \left(\prod_{k=1}^N e^{-{\rm sinh}(s)\tilde{P}_k}\right)
\left(\prod_{k=1}^N e^{-{\rm tanh}(\frac{s}{2})\tilde{P}_k}\right)
\end{eqnarray*}
we can make some rearrangements. Note that $e^{-2sH} = r^{-L_0}r^{2q}$, where $q = \frac c{48}(N^2-1)$ if we set
$r=e^{-2s/N}$. To shorten notations, let us introduce the self-adjoint contractions
$X_k=e^{-{\rm tanh}(\frac{s}{2})P_k}$ and $Y_k=e^{-{\rm sinh}(s)\tilde{P}_k}$. For simplicity, we did not indicate their dependence on $r$, but note that in the range $0<r<1$ they depend norm-continuously on $r$
(for $t > 0, x \ge 0$, the function $e^{-tx}$ is uniformly continuous in $t$).

All $X$-operators and separately, all $Y$-operators commute between themselves, and moreover
$[X_l,Y_m]=0$ whenever $l\neq m, m+1$ (mod $N$).
Recall that $\frac {6\pi} N$ is smaller than the length of each of the intervals of $I_c\cap I_d$.
By cyclically renaming the intervals (but keeping the relation between $I_k$ and $\tilde I_k$ and
the corresponding localization of the operators), we may assume that 
there are $1 \le k < j \le N$ such that $I_k\cup I_{k+1}$ and $I_j\cup I_{j+1}$
are included in the different connected components of $I_c\cap I_d$.
Furthermore, to fix the notation, we may assume that $I_k\cup\cdots \cup I_{j+1} \subset I_c$, while
$I_j\cup\cdots \cup I_N\cup I_1\cdots \cup I_k \subset I_d$.
Note that $\tilde I_k\cup\cdots \cup \tilde I_{j} \subset I_c$ and
$\tilde I_j\cup\cdots \cup I_N\cup I_1\cdots \cup \tilde I_{k-1} \subset I_d$
(see Figure \ref{fig:decomp1bis}).

\begin{figure}[ht]
\centering
\begin{tikzpicture}[line cap=round,line join=round,>=triangle 45,x=1.0cm,y=1.0cm, scale = 0.6]
\clip(-7,-7) rectangle (7,7);
\draw [line width=4pt] (0,0) circle (2.5cm);
\draw  (1.128713155,6.401250395) arc (80:300:6.5) ;
\draw  (-1.041889066,-5.908846518) arc (260:480:6) ;
\draw  [line width=3pt] (0.278898377,3.187823034) arc (85:95:3.2) ;
\draw  [line width=1pt] (0.278898377,3.187823034) arc (85:275:3.2) ;
\draw  [line width=3pt] (0.278898377,-3.187823034) arc (275:285:3.2) ;

\draw  [line width=3pt] (0,4) arc (90:100:4) ;
\draw  [line width=1pt] (0,4) arc (90:270:4) ;
\draw  [line width=3pt] (0,-4) arc (270:280:4) ;

\draw  [line width=3pt] (-0.435778714,4.98097349) arc (95:105:5) ;
\draw  [line width=1pt] (-0.435778714,4.98097349) arc (95:265:5) ;
\draw  [line width=3pt] (-0.435778714,-4.98097349) arc (265:275:5) ;

\draw (-4.8,4.8) node[anchor=north west] {$I_c$};
\draw (3.2,4.8) node[anchor=north west] {$I_d$};
\draw (0.33,3.35) node[anchor=north west] {$ I_k $};
\draw (0.18,4.5) node[anchor=north west] {$\tilde I_k$};
\draw (-0.2,5.4) node[anchor=north west] {$ I_{k+1} $};
\draw (1,-2.65) node[anchor=north west] {$ I_{j+1} $};
\draw (0.88,-3.5) node[anchor=north west] {$\tilde I_j$};
\draw (0.7,-4.6) node[anchor=north west] {$ I_j $};
\end{tikzpicture}\caption{Localization of the factors of $C_r$.
The indicated intervals $I_\bullet, \tilde I_\bullet$ correspond to thick segments.
The operators $\prod_{l=k}^{j+1} X_l,\, \prod_{l=k}^{j} Y_l,\,\prod_{l=k+1}^{j} X_l$
are localized in the arcs, from the inside, respectively.
The corresponding factors in $D_r$ are localized in the complements of these arcs, respectively.
}
\label{fig:decomp1bis}
\end{figure}
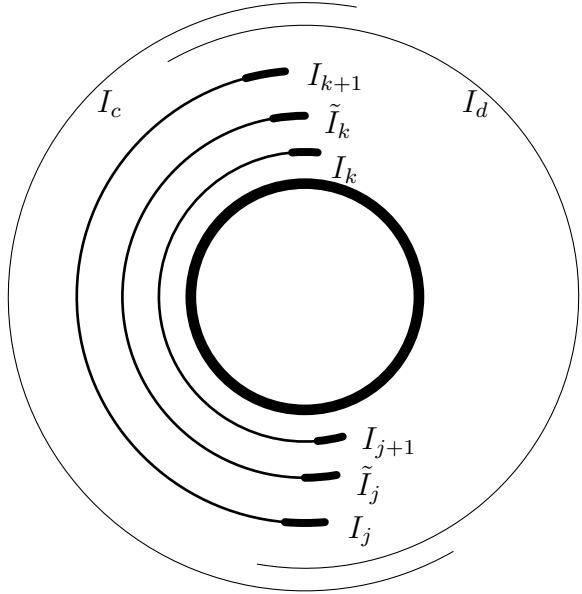

By the localization explained above, we obtain
\begin{eqnarray*}
\nonumber
r^{-L_0}r^{2q}
&=& \left(\prod_{l=1}^N X_l\right) \;\;\; \left(\prod_{l=1}^N Y_l\right)\;\;\;
\left(\prod_{l=1}^N X_l\right) \\
\nonumber
&=&
\left(\prod_{l=1}^{k-1} X_l\prod_{l=k}^{j+1} X_l\prod_{l=j+2}^{N} X_l\right) 
 \, \left(\prod_{l=1}^{k-1} Y_l\prod_{l=k}^{j} Y_l\prod_{l=j+1}^{N} Y_l\right) \,
\left(\prod_{l=1}^{k} X_l\prod_{l=k+1}^{j} X_l\prod_{l=j+1}^{N} X_l\right)  \\
&=&
\left(\prod_{l=k}^{j+1} X_l\prod_{l=k}^{j} Y_l\prod_{l=k+1}^{j} X_l\right)\;
\left(\prod_{l=1}^{k-1} X_l\prod_{l=j+2}^{N} X_l \prod_{l=1}^{k-1} Y_l \prod_{l=j+1}^{N} Y_l
\prod_{l=1}^{k} X_l \prod_{l=j+1}^{N} X_l\right).
\end{eqnarray*}
Here the first part $C_r=\left(\prod_{l=k}^{j+1} X_l\prod_{l=k}^{j} Y_l\prod_{l=k+1}^{j} X_l\right)$ is an element of $\a_U(I_c)$, where 
whereas the second part $D_r=\left(\prod_{l=1}^{k-1} X_l\prod_{l=j+2}^{N} X_l \prod_{l=1}^{k-1} Y_l \prod_{l=j+1}^{N} Y_l
\prod_{l=1}^{k} X_l \prod_{l=j+1}^{N} X_l\right)$ 
is an element of $\a_U(I_d)$.

By construction, $\|C_r\|,\|D_r\|\leq 1$. Thus, we have obtained the desired decomposition 
$r^{-L_0} =  (\frac{1}{r^q} C_r) (\frac{1}{r^q} D_r)$.

\end{proof}

In the above proposition we specifically worked with $L_0$. However, by considering the adjoint actions of $U(\gamma)$ for all diffeomorphisms $\gamma\in \diff$ on the 
decompositions found above, it is now easy to draw the following conclusion.
\begin{corollary}
\label{rt_decomp}
Let $I_c,I_d\in \I$ be two open proper arcs such that $I_c\cup I_d = S^1$, and $f$ a strictly positive smooth function on $S^1$. Then there exist two norm-continuous 
families of operators 
$(0,1)\ni r\mapsto C_r \in\a_U(I_c)$ and
$(0,1)\ni r\mapsto D_r \in\a_U(I_d)$ such that
$r^{T(f)}=C_r D_r.$
\end{corollary}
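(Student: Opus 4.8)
The plan is to reduce the corollary to the proposition already proved by conjugating the decomposition of $r^{L_0}$ with an appropriate diffeomorphism. The key observation is that $T(f)$ for a strictly positive smooth $f$ is, up to the representation $U$, conjugate to a multiple of $L_0$. Concretely, I would first find a diffeomorphism $\gamma \in \diff$ and a constant $\lambda > 0$ such that $\gamma$ maps the constant vector field $1$ (whose stress-energy is $L_0$, up to an additive scalar) to $f/\lambda$, or equivalently such that the adjoint action of $U(\gamma)$ carries $T(1) = L_0$ to $\lambda^{-1} T(f)$ plus a scalar. The existence of such a $\gamma$ is a standard fact about the transitive action of $\diff$ on strictly positive smooth densities on $S^1$: any two strictly positive smooth functions on the circle, after normalizing total integral, are related by an orientation-preserving diffeomorphism (one integrates $1/f$ to build the reparametrization, matching total mass by the constant $\lambda$).

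Granting this, I would exploit the transformation law of the stress-energy tensor under $\diff$. Differentiating the covariance relation $\Ad U(\gamma) T(g) = T(\gamma_* g) + \beta(\gamma, g)\1$ for the appropriate pushforward $\gamma_* g$ and cocycle $\beta$, one obtains that $\Ad U(\gamma)$ sends $\lambda L_0$ to $T(f)$ modulo an additive central term, say $T(f) = \Ad U(\gamma)(\lambda L_0) + c_0 \1$ for some real $c_0$. The additive scalar is harmless: it only contributes a positive numerical factor $r^{c_0}$ to the exponential, which I can absorb into $C_r$. Thus
\begin{equation*}
r^{T(f)} = r^{c_0}\,\Ad U(\gamma)\!\left(r^{\lambda L_0}\right) = r^{c_0}\, U(\gamma)\, (r^\lambda)^{L_0}\, U(\gamma)^*.
\end{equation*}

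Next I would apply Proposition \ref{prop:dec1} to $(r^\lambda)^{L_0}$. Since $r \in (0,1)$ and $\lambda > 0$, we have $r^\lambda \in (0,1)$, so the proposition yields a norm-continuous decomposition $(r^\lambda)^{L_0} = \tilde C_{r^\lambda}\, \tilde D_{r^\lambda}$ with $\tilde C_{r^\lambda} \in \a_U(\gamma^{-1} I_c)$ and $\tilde D_{r^\lambda} \in \a_U(\gamma^{-1} I_d)$, where I choose the covering arcs for the proposition to be $\gamma^{-1} I_c$ and $\gamma^{-1} I_d$ (these still cover $S^1$ and are open proper arcs, as $\gamma$ is an orientation-preserving diffeomorphism). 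Conjugating by $U(\gamma)$ and using the covariance $\Ad U(\gamma)(\a_U(J)) = \a_U(\gamma J)$, I set $C_r = r^{c_0}\, U(\gamma)\, \tilde C_{r^\lambda}\, U(\gamma)^*$ and $D_r = U(\gamma)\, \tilde D_{r^\lambda}\, U(\gamma)^*$, which land in $\a_U(I_c)$ and $\a_U(I_d)$ respectively and satisfy $r^{T(f)} = C_r D_r$. Norm-continuity in $r$ is inherited from that of $\tilde C, \tilde D$ together with continuity of $r \mapsto r^\lambda$ and $r \mapsto r^{c_0}$.

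The main obstacle I anticipate is making the conjugation statement $T(f) = \Ad U(\gamma)(\lambda L_0) + c_0\1$ fully rigorous at the level of \emph{unbounded} self-adjoint operators, rather than merely at the formal Lie-algebra level on $\d_{\mathrm{fin}}$. One must verify that $\Ad U(\gamma)$ intertwines the correct self-adjoint extensions and cores, so that the scalar-valued Schwarzian cocycle term is genuinely a constant multiple of the identity and the functional-calculus identity $r^{T(f)} = U(\gamma)(r^\lambda)^{L_0}U(\gamma)^*\cdot r^{c_0}$ holds as an operator equation. This is where the technology recalled in the preliminaries — essential self-adjointness of $T(g)$ on $C^\infty(L_0)$, the evaluation of $T$ on the enlarged class of $\|\cdot\|_{3/2}$-functions, and the transformation behavior of the stress-energy tensor — must be invoked carefully; the remaining steps are then routine bookkeeping with the already-established decomposition.
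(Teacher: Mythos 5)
Your proof is correct and follows essentially the same route the paper intends: the paper deduces the corollary from Proposition \ref{prop:dec1} precisely by noting that every strictly positive smooth vector field $f$ on $S^1$ is the pushforward under some $\gamma\in\diff$ of a constant multiple $\lambda$ of the rotation field (with $\lambda$ fixed by $\int_0^{2\pi}d\theta/f$, built by integrating $1/f$), and then conjugating the decomposition of $(r^\lambda)^{L_0}$, taken with respect to the arcs $\gamma^{-1}I_c,\gamma^{-1}I_d$, by $U(\gamma)$. The operator-level identity $\Ad U(\gamma)(T(g))=T(\gamma_* g)+c_0\1$ with $c_0$ real, which you flag as the main obstacle, is handled by Stone's theorem: both strongly continuous unitary one-parameter groups $t\mapsto \Ad U(\gamma)(e^{itT(g)})$ and $t\mapsto e^{itT(\gamma_* g)}$ project to the same one-parameter subgroup of $\U(\h)/\T$, so they differ by a continuous character $e^{ic_0t}$ of $\RR$, and the functional-calculus identity for $r^{T(f)}$ then holds exactly as you write.
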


\section{Further decomposition}\label{furtherdecomp}

In this section, we shall consider further decompositions of $r^{L_0}$.
For this purpose, it will be important that some representations of a real Lie group
contained in a complex Lie group can be continued holomorphically to
representations of a certain complex Lie semigroup \cite{Ols}.
This applies to positive energy representations of the M\"obius group and the semigroup can be explicitly constructed.

We denote the (real) Lie algebra of $\mob$ by $\mathfrak{M\ddot ob}$
and its generators by $il_0, i(l_1 + l_{-1}), l_1-l_{-1}$ (while $L_0, L_1+L_{-1},L_1- L_{-1}$ are reserved for representations).
Let us introduce an invariant cone
\[
 \C := \mathrm{Conv}\{ \Ad_g(r\cdot il_0): g \in \mob, r > 0\},
\]
where $\mathrm{Conv}$ stands for the convex hull. This is invariant under the adjoint action
by $\mob$ by definition, and is nontrivial (namely, not the whole $\mathfrak{M\ddot ob}$,
because all these elements are represented by a positive
operator in a positive energy representation of $\mob$, and we know that such nontrivial positive energy representations
exist). 

Let $\mob_\CC$ denote the group of the complex fractional linear transformations $\mob_\CC$,
which is a complex Lie group which includes $\mob$ as a real Lie subgroup.
They act naturally on the Riemann surface $P(\CC)$ by
$z \mapsto \frac{az + b}{cz + d}$, and $\mob_\CC$ is identified with $\mathrm{PGL}(2,\CC)$, while
$\mob$ with the subgroup of the form $\left(\begin{array}{cc}a & b\\ \bar b & \bar a \end{array}\right)$
with $|a|^2-|b|^2 = 1$. Recall that $\mob$ preserves the open unit disk $D_1$,
and moreover maps the unit circle $S^1$ onto itself.

We consider the semigroup $\Gamma(\C)$ of $\mob_\CC$ of the elements which map $\overline{D_1}$ into $D_1$.
It contains the contractions $\{\kappa_t:z\mapsto e^{-t}z\,|\,t > 0\}$.
We have a convenient representation of elements in $\Gamma(\C)$.
\begin{lemma}\label{lem:decsemi}
Every $\gamma \in \Gamma(\C)$ has the form
\begin{equation}\label{eq:decsemi}
\gamma=g_1\cdot\kappa_t\cdot g_2
\end{equation}
where $g_1,g_2\in\mob$ and $t > 0$.
For a given $\gamma$, $g_1$ and $g_2$ are uniquely determined up to a rotation.
\end{lemma}
\begin{proof}
Let $\gamma\in\Gamma(\C)$. By definition, $\gamma(\overline{D_1})\subset D_1$ and $\gamma(D_1)$ is a disk
because $\gamma$ is a linear fractional transformation.
There exists an element in $\mob$ which maps $\gamma (D_1)$ to a disk concentric to $D_1$.
Indeed, up to a rotation, we can assume that the diameter of $\gamma (D_1)$ is included in $(-1,1)$.
Then there is a (unique) dilation $g_1 \in \mob$ corresponding to the upper half-circle
(see \cite[Appendix A]{weinerthesis}) such that $g_1^{-1}$ maps the diameter of $\gamma(D_1)$ onto
a symmetric interval $(-s,s)$, $0 < s < 1$, thus $(g_1^{-1} \cdot \gamma) (D_1)$ is concentric to $D_1$.
Now, there exists a $t > 0$ such that $\kappa_t(D_1)=(g_1^{-1} \cdot \gamma) (D_1)$.
As $g_2 := \kappa_{-t}\cdot g_1^{-1}\cdot \gamma$
is a linear fractional transformation which preserves $S^1$, it must be in $\mob$.

As for uniqueness, one only has to note that rotations are the only elements in $\mob$
which preserve a contracted circle $rS^1, 0 < r < 1$.
\end{proof}

By equation \eqref{eq:decsemi}, an element of $\gamma\in\Gamma(\C)$  can be uniquely decomposed as
$\gamma=g\cdot \tilde\kappa_t$ where  $\tilde\kappa_t=\tilde g\cdot\kappa_t\cdot\tilde g^{-1}$ for some $g,\tilde g\in\mob$
($\tilde g$ is unique up to a rotation).
In addition, $t\mapsto\tilde \kappa_t = \Exp (t\, y)$ is the one-parameter semigroup corresponding to an element $y \in \C$.

The main steps of the following result are due to Olshanskii \cite[Theorem 4.5]{Ols},
but the original proof is written
for the universal covering semigroup,
and in order to obtain the result we need, one would have to digest some issues on the topology and notations.
For better readability, we present the proof in our case.

\begin{theorem}\label{thm:ols}
Any unitary, strongly continuous, positive energy representation of the group $\mob$ admits
a unique continuous extension to the closure $\overline{\Gamma(\C)}$ which is analytic on the interiour $\Gamma(\C)$.
\end{theorem}
\begin{proof}
Let $\gamma\in\Gamma(\C)$ and $V$ be a positive energy representation of $\mob$.
Then there is the corresponding representation of the Lie algebra $\mathfrak{M\ddot ob}$
(on the G\aa rding domain).
Let $L_0$ be the conformal Hamiltonian in the representation $V$ of $\mob$, corresponding to the element $l_0$
($t\mapsto e^{itL_0}$ is the unitary representation of rotations, as usual).
With the decomposition $\g = g \cdot \Exp (t \,y)$ of Lemma \ref{lem:decsemi}
and the remark thereafter,
we set $\tilde V(\gamma)=V(g)e^{-tY}$, where $Y$ is the representation of $y$ in $ V$.
It is well-defined as the difference in rotation does not matter,
and $e^{-t Y}$ is a positive contraction.


By Nelson's theorem \cite[Corollary 3.2]{Nel} and the commutation relations in $\mathfrak{M\ddot ob}$,
there is a dense set of analytic vectors $\xi$ of the representation $V$
(actually one can simply consider linear combinations of eigenvectors of $L_0$)
such that the map $\mob \ni g\mapsto V(g)\xi$ continues analytically to a neighborhood of the identity $W$ in $\mob_\CC$,
where $g$ is considered as an element in $\mob_\CC$.

We claim that, for an element $\gamma \in W \cap \Gamma(\C)$, the map $V(\gamma):\h \ni \xi \mapsto V(\gamma)\xi \in \h$ is bounded.
Every $g \in \mob$ can be uniquely written as $\underline g \rho_\theta$,
where $\underline g$ belongs to the translation-dilation subgroup and $\rho_\theta$ is a rotation
(by the Iwasawa decomposition).
Now, on one hand, if one continues analytically
the rotations $\left(\begin{array}{cc} e^{\frac{i\theta} 2} & 0\\ 0 & e^{-\frac{i\theta} 2} \end{array}\right)$
to $\zeta = \theta + i\lambda, \lambda > 0$ in $\mob_\CC$, one obtains $\rho_\theta\kappa_{\lambda} \in \Gamma(\C)$,
where $\kappa_\lambda$ is a contraction.
On the other hand, we have $V(\underline g \rho_\theta) = V(\underline g)e^{i\theta L_0}$.
As we continue $V(\underline g \rho_\theta)$ analytically in $\theta$ to $\theta + i\lambda$,
this is a bounded operator for $\lambda \ge 0$ as $L_0$ has positive spectrum.
For any other element $\gamma = g\tilde \kappa_\lambda$ in $\Gamma(\C)$, there is a subgroup
$\tilde g \kappa_\lambda \tilde g^{-1}$, conjugate to the rotation group,
and we can infer that the continuation $V(\gamma)$ is bounded by the Iwasawa decomposition
with respect to this subgroup.
By the density of analytic vectors, $\Gamma(\C)\cap W \ni \gamma \mapsto V(\gamma) \in \b(\h)$ can be defined
by boundedness, and since the uniform limit of analytic functions is analytic,
it is analytic in $\Gamma(\C)\cap W$ and
continuous on the closure $\overline{\Gamma(\C)}\cap W$ in the strong topology because,
for any $\xi$ in the dense domain of analytic vectors, the map $W \ni \gamma\mapsto V(\gamma)\xi \in \h$
is continuous and the family $V(\gamma)$ is uniformly bounded for $\gamma \in \Gamma(\C)\cap W$.

Furthermore, this analytic continuation coincides with $\tilde V(\gamma)$, when restricted to $W\cap \Gamma(\C)$.
Indeed, again by the Iwasawa decomposition we have $g = \underline g \rho_\theta \in \mob$,
Now, for an element of the form $\gamma = \underline g \rho_\theta \kappa_t$,
we have $\tilde V(\gamma) = V(\underline g)e^{i\theta L_0}e^{-tL_0}
= V(\underline g)e^{i(\theta+it)L_0}$,
and this is indeed an analytic continuation of $V$ in the variable $\theta$,
hence it must coincide with the above continuation.
Similarly, one can prove it for arbitrary element $\g = g \cdot \tilde \kappa_t$
by considering the conjugate Iwasawa decomposition.


The map $\overline{\C}\ni y \mapsto \tilde V(\Exp y)$ is strongly continuous
and real analytic on $\C$.
To see this, note that it is always possible to find an $n\in\NN$ such that $\Exp (-y/n)\in \overline{\Gamma(\C)}\cap W$
and the $n$-th power $\tilde V(\Exp y) = \tilde V(\Exp y/n)^n$ on uniformly bounded sets is strongly continuous
and real analytic.
Now $\tilde V(\g)$ is analytic on $\Gamma(\C)$ and continuous on the closure $\overline{\Gamma(\C)}$.
Indeed, the expression $\tilde V(g\cdot \Exp y) = V(g)e^{-Y}$,
where $Y$ is the representation of $y$ in $V$,
is real analytic in both variables $g$, and $y \in \C$. Furthermore, we have seen that it is complex analytic
in an open set $W\cap \Gamma(\C)$. From this we conclude that it is complex analytic in the whole $\Gamma(\C)$:
the domain of complex analyticity is open by definition, and also closed,
because if there were a boundary point $\gamma$, one could use the real analyticity (convergence of the Taylor expansion)
to continue complex analytically the map to a neighborhood of $\gamma$, but this continuation would have to coincide
with the original map because of the real analyticity.

It remains to prove that $\tilde V$ respects the product relation of the semigroup,
namely that $\tilde V(\gamma_1)\tilde V(\gamma_2)=\tilde V(\gamma_1\gamma_2)$.
This follows by the fact that the maps $\gamma_1\mapsto \tilde V(\gamma_1)\tilde V(\gamma_2)$ ($\gamma_2$ is fixed)
and $\gamma_2\mapsto \tilde V(\gamma_1)\tilde V(\gamma_2)$ ($\gamma_1$ is fixed) are analytic on $\Gamma(\C)$
and coincide on $\mob\times\mob \subset \overline{\Gamma(\C)}\times \overline{\Gamma(\C)}$.
\end{proof}

\begin{corollary}\label{co:conj}
 Let $V$ be a positive energy, strongly continuous, unitary representation of $\mob$ with associated conformal Hamiltonian $L_0$.
Then for every $r\in (0,1)$ there exist $r_1, r_2\in (0,1)$ and $g, g_1, g_2 \in \mob$, $g \neq \mathrm{id}$,
such that
\[
 r^{L_0} = r_1^{H_1}r_2^{H_2}V(g)
\]
in the proper sense, where $H_j = \Ad U(g_k)(L_0)$ ($k = 1,2$).

\end{corollary}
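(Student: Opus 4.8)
The plan is to reduce the claimed operator identity to an elementary identity of $2\times 2$ matrices and then transport it back to operators exactly as in the proof of Corollary \ref{co:neretin}. Realise $\mob\cong\mathrm{PSU}(1,1)$ acting on the unit disc $\mathbb{D}$, write $M_g\in\mathrm{SU}(1,1)$ for (a lift of) $g\in\mob$, and record the semigroup element $V(g)\,r^{L_0}\,V(\tilde g)$ by the matrix $M_g D_r M_{\tilde g}\in\mathrm{SL}(2,\CC)$, where $D_r=\mathrm{diag}(r^{1/2},r^{-1/2})$ corresponds to $r^{L_0}$. Under this dictionary $r_j^{H_j}=V(g_j)r_j^{L_0}V(g_j)^*$ is recorded by the positive matrix $A_j:=M_{g_j}D_{r_j}M_{g_j}^{-1}$, and the desired relation $r^{L_0}=r_1^{H_1}r_2^{H_2}V(g)$ becomes the matrix identity
\[
A_1A_2=D_r\,M_g,\qquad M_g\in\mathrm{SU}(1,1),\ M_g\neq\pm\1 .
\]

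To decide when such a factorisation exists, I would introduce the $J$-adjoint $X^\#:=JX^*J$ with $J=\mathrm{diag}(1,-1)$, so that $\mathrm{SU}(1,1)=\{G:G^\#=G^{-1}\}$, while $D_r^\#=D_r$ and each $A_j$ is positive and self-adjoint for $\#$. A direct check shows that $X:=A_1A_2$ factorises as $X=D_rM_g$ with $M_g\in\mathrm{SU}(1,1)$ and $r\in(0,1)$ if and only if $XX^\#$ is diagonal (its entries being then $r$ and $r^{-1}$); moreover $M_g\neq\pm\1$ precisely when $X$ is not itself diagonal. Thus the whole problem becomes: choose $g_1,g_2\in\mob$ and $r_1,r_2\in(0,1)$ so that $XX^\#$ is diagonal while $X$ is not.

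I would solve this by perturbing the trivial solution. At $g_1=g_2=\mathrm{id}$ one has $X=D_{r_1r_2}$, which is diagonal and gives $g=\mathrm{id}$. Parametrising $g_j$ by $p_j=g_j\cdot 0\in\mathbb{D}$ near the origin, the first-order expansion $A_j\approx D_{r_j}+\begin{pmatrix}0&p_j\delta_j\\-\bar p_j\delta_j&0\end{pmatrix}$ (with $\delta_j=r_j^{-1/2}-r_j^{1/2}>0$) shows that the off-diagonal entry of $XX^\#$ is, to leading order, a $\CC$-linear form in $(p_1,p_2)$ with nonvanishing coefficient of $p_1$. By the implicit function theorem the vanishing of this entry determines $p_1=p_1(p_2)$ for small $p_2$; along this one-parameter family the off-diagonal entry of $X$ equals a nonzero multiple of $p_2$ (the scalar factor being proportional to $r_1r_2-1\neq 0$), so $X$ is non-diagonal and hence $g\neq\mathrm{id}$, while $r$ stays close to $r_1r_2<1$. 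This produces the required parameters.

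Finally I would lift the matrix identity to operators just as in Corollary \ref{co:neretin}: since Neretin's semigroup law composes semigroup elements precisely by the above matrix multiplication, the relation $A_1A_2=D_rM_g$ yields $r_1^{H_1}r_2^{H_2}=r^{L_0}V(g)$ up to a phase in the $\mathrm{U}(1)$-current representation (which contains every positive-energy irreducible of $\mob$), the phase being fixed to $1$ by evaluating both sides on $\Omega$; restricting to irreducible components then gives the identity for an arbitrary positive-energy $V$. I expect the main obstacle to be exactly the non-degeneracy in the third step — guaranteeing that one can move off the trivial solution $g=\mathrm{id}$ while staying on the locus where $XX^\#$ is diagonal — rather than the algebra or the (now routine) transport back to operators.
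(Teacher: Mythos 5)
Your matrix-level analysis is essentially correct --- the first-order expansion of $A_j$, the characterisation of factorisability of $X=A_1A_2$ as $D_rM_g$ via diagonality of $XX^\#$, and the non-degeneracy computation showing that the off-diagonal entry of $X$ is a nonzero multiple of $p_2$ (with factor proportional to $r_1r_2-1$) all check out. The genuine gap is in the step you call ``routine'': the transport back to operators. Corollary \ref{co:neretin} (and Theorem \ref{th:neretin} as quoted in the paper) are pure \emph{existence} statements: the product $V(g_1)r_1^{L_0}V(\tilde g_1)\,V(g_2)r_2^{L_0}V(\tilde g_2)$ equals $V(g_3)r_3^{L_0}V(\tilde g_3)$ for \emph{some} $g_3,\tilde g_3\in\mob$ and $r_3\in(0,1]$, with no information whatsoever on how $(g_3,r_3,\tilde g_3)$ depend on the input data. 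Your argument needs much more: that the operator composition is computed by $2\times 2$ matrix multiplication, i.e.\ that $M_uD_rM_v\mapsto V(u)r^{L_0}V(v)$ is a semigroup homomorphism from (the relevant part of) $\mathrm{SL}(2,\CC)$ into $\b(\h)$. That statement is true --- it is the classical theory of holomorphic contraction representations of the Olshanski semigroup of $\mathrm{SU}(1,1)$ --- but it is strictly stronger than anything the paper establishes or cites, and you neither prove it nor can you obtain it from Corollary \ref{co:neretin}; attributing it to that corollary is a misattribution. Without it, knowing $A_1A_2=D_rM_g$ as matrices tells you nothing about \emph{which} decomposition $V(g_3)r_3^{L_0}V(\tilde g_3)$ the black-box existence statement returns, and in particular does not exclude $g_3\tilde g_3 = \mathrm{id}$, which is the whole point of the corollary.

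The paper's own proof is engineered precisely to avoid this issue, using only the existence form of Corollary \ref{co:neretin} plus a cheap trick to force $g\neq\mathrm{id}$: choose $\tilde g_1,\tilde g_2$ so that $\tilde H_k=\Ad V(\tilde g_k)(L_0)$ do not strongly commute (possible since $L_0$ is maximally abelian in the Lie algebra), whence $r_1^{\tilde H_1}$ and $r_2^{\tilde H_2}$ fail to commute for suitable $r_1,r_2$; apply Corollary \ref{co:neretin} to their product, rearrange to $r^{L_0}=r_1^{H_1}r_2^{H_2}V(g)$ with $g=g_3\tilde g_3$, and observe that if $g$ were the identity, the right-hand side would be a product of two non-commuting positive operators, hence not self-adjoint, contradicting self-adjointness of the left-hand side. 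To salvage your route you would have to either prove the semigroup-homomorphism property (e.g.\ by the holomorphic-extension argument of Olshanski/Neretin, or by explicit computation in the lowest-weight components of the $\mathrm{U}(1)$-current representation), or replace your final step by an argument that, like the paper's, consumes only existence.
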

\begin{proof}
 We choose two elements $\tilde g_1, \tilde g_2\in \mob$ such that
 $\tilde H_k = \Ad V(\tilde g_k)(L_0)$, and $\tilde H_1$ and $\tilde H_2$ do not strongly commute:
 such choices are actually abundant, since $L_0$ is maximally abelian in the Lie algebra.

Now, arguing by contradiction, assume that there exist $r_1,r_2\in(0,1)$ such that  $r_1^{\tilde H_1}$ and $r_2^{\tilde H_2}$ commute.
The operator valued functions $z\mapsto e^{z \tilde H_1}r_2^{\tilde H_2}$ and $z\mapsto r_2^{ \tilde H_2}e^{z \tilde H_1}$
are continuous on $\Re z\leq 0$ and analytic in $\Re z< 0$. Then the maps coincide when $z=q \ln r_1$ with $q\in \mathbb R$,
hence by analyticity they must coincide on the full domain.
One can argue analogously with $w\mapsto e^{z\tilde H_1}e^{w \tilde H_2}$ and
$w\mapsto e^{w \tilde H_2}e^{z\tilde H_1}$ to get that $e^{it\tilde H_1}$ and $e^{is\tilde H_2}$ commute for any $s,t\in\RR$ by analytic continuation. This contradicts the fact that  their generators do not strongly commute. 

Let $r_1,r_2\in (0,1)$. By applying Lemma \ref{lem:decsemi} and Theorem \ref{thm:ols} to
\[
 r_1^{\tilde H_1}r_2^{\tilde H_2} = V(\tilde g_1)r_1^{L_0}V(\tilde g_1)^*V(\tilde g_2)r_2^{L_0}V(\tilde g_2)^*, 
\]
we obtain $g_3, \tilde g_3 \in \mob$ and $r \in (0,1)$ such that
 \[
  V(\tilde g_1)r_1^{L_0}V(\tilde g_1)^*V(\tilde g_2)r_2^{L_0}V(\tilde g_2)^* = V(g_3)r^{L_0}V(\tilde g_3)^*,
 \]
 in the proper sense, or equivalently,
 \[
 r^{L_0} = V(g_3^{-1}\tilde g_1)r_1^{L_0}V(g_3^{-1}\tilde g_1)^*V(g_3^{-1}\tilde g_2)r_2^{L_0}V(g_3^{-1}\tilde g_2)^* V(g_3\tilde g_3).
 \]
 By defining $g_k = g_3^{-1}\tilde g_k$, hence accordingly $H_k := \Ad V(g_3^{-1}\tilde g_k)(L_0)$
 and $g := g_3\tilde g_3$, we obtain the desired equality.
 To check that $g \neq \mathrm{id}$, note that by our choice of $\tilde H_k$, $r_1^{H_1}$ and $r_2^{H_2}$ do not commute as well.
 Yet, in the equality
 \[
 r^{L_0} = r_1^{H_1}r_2^{H_2}V(g),
 \]
 the left-hand side is self-adjoint, while if $g = \mathrm{id}$, the right-hand side would not be self-adjoint.
Therefore, $g \neq \mathrm{id}$. Notice as $r_1,r_2$ are chosen arbitrarily in $(0,1)$, then the decomposition holds for any $r\in(0,1)$ just by strong continuity of the (extension of) the representation $V$ to $\Gamma(\C)$.
\end{proof}

\begin{proposition}
\label{prop:dec2}
Let $\A$ be a conformal net, $U$ be the associated projective unitary representation of $\diff$,
and $L_0$ the conformal Hamiltonian.
For every $r\in (0,1)$ there exists a M\"obius transformation $g \neq {\rm id}$, such that whenever $I_c,I_d\in \I$ are
two open proper arcs covering the circle, i.e.\! $I_c\cup I_d = S^1$, we can find two bounded operators $C \in\a_U(I_c)$ and $D \in\a_U(I_d)$ giving the decomposition
\begin{equation*}
r^{L_0} = C D U(g)
\end{equation*}
in the proper sense.
\end{proposition}
\begin{proof}
 We apply Corollary \ref{co:conj} to obtain
 $r, r_1, r_2\in (0,1)$, $g \in \mob, g \neq \mathrm{id}$ and $H_1, H_2$
such that
 \[
  r^{L_0} = r_1^{H_1}r_2^{H_2}U(g).
 \]
 Then we apply Corollary \ref{rt_decomp} to $H_k$ 
 with the intervals $K_{k,c}, K_{k,d}$ such that
 $K_{k,c} \subset I_c, K_{k,d} \subset I_d$ and $K_{1,d} \cap K_{2,c} = \emptyset$
 (see Figure \ref{fig:decomp2}), to obtain operators $C_k, D_k$ such that
 $r_k^{H_k} = C_kD_k$. By the localization, $C_2$ and $D_1$ commute.
 
 Hence it holds that  $r^{L_0} = r_1^{H_1}r_2^{H_2}U(g) = C_1 D_1 C_2 D_2 U(g) = C_1 C_2 D_1 D_2 U(g)$,
 and $C:=C_1C_2$ is localized in $K_{1,c}\cup K_{2,c} \subset I_c$, while
 $D:=D_1D_2$ is localized in $K_{1,d}\cup K_{2,d} \subset I_d$, as desired.


\end{proof}

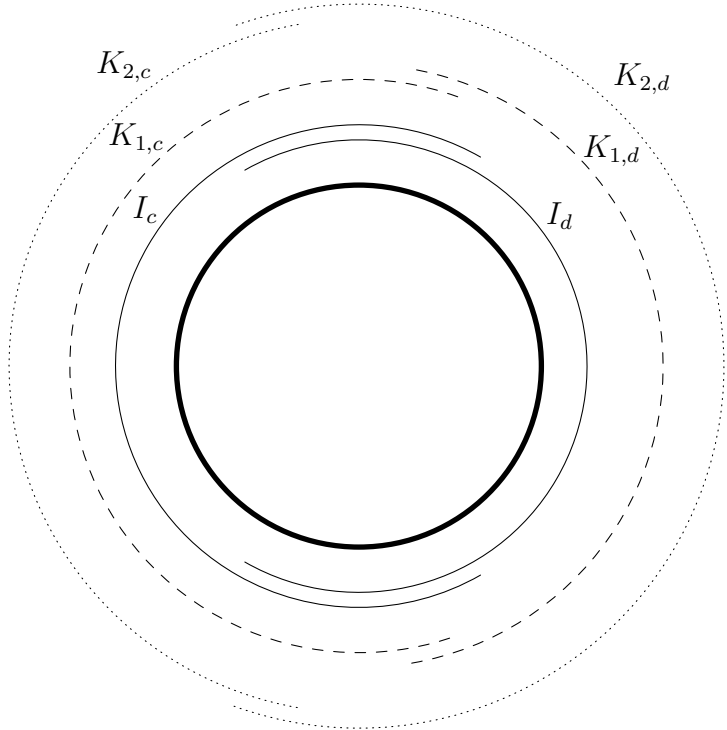
\begin{figure}[ht]
\centering
\begin{tikzpicture}[line cap=round,line join=round,>=triangle 45,x=1.0cm,y=1.0cm, scale= 0.8]
\clip(-10,-6.2) rectangle (10,6.2);
\draw [line width=2pt] (0,0) circle (3cm);
\draw (-1.875,-3.247595264) arc (-120:120:3.75) ;
\draw (2,3.464101615) arc (60:300:4) ;
\draw [dash pattern=on 4pt off 4pt] (1.624595681,4.463539949) arc (70:290:4.75) ;
\draw [dash pattern=on 4pt off 4pt] (0.868240888,-4.924038765) arc (-80:80:5) ;
\draw [dotted] (-0.998477022,5.66264458) arc (100:260:5.75) ;
\draw [dotted] (-2.05212086,-5.638155725) arc (-110:110:6) ;
 \draw (2.9,2.9) node[anchor=north west] {$I_d$};
 \draw (-3.9,3) node[anchor=north west] {$I_c$};
 \draw (-4.3,4.2) node[anchor=north west] {$K_{1,c}$};
 \draw (3.5,4) node[anchor=north west] {$K_{1,d}$};
 \draw (-4.5,5.4) node[anchor=north west] {$K_{2,c}$};
 \draw (4,5.2) node[anchor=north west] {$K_{2,d}$};
\end{tikzpicture}\caption{Intervals $I_c, I_d, K_{1,c}, K_{1,d}, K_{2,c}, K_{2,d}$.}
\label{fig:decomp2}
\end{figure}


\section{Normality of the product vacuum state}\label{normality}

We can now prove our main claim: for a conformal net on $S^1$
--- where by ``conformal'' we mean that it has the full diffeomorphism covariance (see Section \ref{preliminaries}) ---
the split property is automatic. Let $(\a,U,\Omega)$ be a conformal net, and assume $I_a,I_b\in \I$ are two open proper arcs separated by a positive distance. 

Consider the $*$-algebra $\a(I_a)\vee_{\mathrm{alg}}\a(I_b)$ generated by the commuting factors $\a(I_a)$ and $\a(I_b)$.
We shall now introduce a family $\{\phi_z\}$ of functionals on this algebra 
indexed by a complex number $z$, $|z| \le 1$. For a generic element $X\in \a(I_a)\vee_{\mathrm{alg}}\a(I_b)$, 
\begin{equation}\label{genericX}
X= \sum_{k=1}^nA_kB_k\qquad (n\in\NN, \; A_k\in \a(I_a),\; B_k \in \a(I_b))
\end{equation}
and a complex number $z$ in the closed unit disk $\d_1=\{z\in\CC: |z|<1\}$,
let
\begin{equation*}
\phi_z(X)=\sum_{k=1}^n\langle\Omega,A_k z^{L_0}B_k\Omega\rangle.
\end{equation*}
The above quantity is well-defined in the sense that it indeed depends only on $z$ and $X$, but not on the particular decomposition chosen for $X$.
Indeed, since $\a(I_a)$ and $\a(I_b)$ are commuting factors, there is a natural isomorphism between the algebraic tensor product
$\a(I_a)\odot\a(I_b)$ and $\a(I_a)\vee_{\mathrm{alg}}\a(I_b)$, see \cite[Proposition IV.4.20]{Takesaki}.
In particular, the bilinear form $\a(I_a)\times \a(I_b)\ni 
(A,B)\mapsto \langle\Omega,A z^{L_0}B\Omega\rangle \in \CC$ extends to a unique linear functional $\phi_z$ on $\a(I_a)\vee_{\mathrm{alg}}\a(I_b)$. 

Note that the expression $z^{L_0}$ is indeed a well-defined bounded operator for every $z\in \d_1$
(for $z=0$, we define it by continuity in the strong operator topology, hence to be the projection $P_0$ onto $\CC\Omega$):
this is because ${\rm Sp}(L_0)\subset \NN$. That is, we are using not just the positivity of $L_0$, but also that elements of its spectrum are all integers (e.g.\! $z^{\frac{1}{2}} = \sqrt{z}$ would be ambiguous). 

For every $X\in\a(I_a)\vee_{\mathrm{alg}}\a(I_b)$, the map $z\mapsto\phi_z(X)$ is analytic in $\d_1$. In fact, denoting by $P_m$ the spectral projection of $L_0$ associated to the eigenvalue $m$, we have the power series decomposition of $\phi_z(X)$
\begin{equation*}
\phi_z(X)=\sum_{k=1}^n\sum_{m=0}^\infty\langle\Omega,A_kP_mB_k\Omega\rangle z^m.
\end{equation*}
Since $P_0=\langle\Omega,\,\cdot\,\rangle\,\Omega$ is the one-dimensional  projection on the vacuum vector, we have that 
\begin{equation*}
\phi_0(X)=\sum_{k=1}^n \omega(A_k)\omega(B_k),
\end{equation*}
i.e.\! $\phi_0$ is the product vacuum state, whereas $\phi_1=\omega$. Thus, in view of    
Proposition \ref{prop:split}, in order to prove the split property, we need to show that while ``changing'' the parameter $z$ from $1$ to $0$, the functional $\phi_z$ remains normal. In particular, it would be desirable to obtain estimates on $\|\phi_1-\phi_0\|$.

The idea of considering $\phi_z$ not only at the points $z=1$ and $z=0$, but on a larger area
(so that its analytic dependence on $z$ can be exploited) comes from \cite{FJ}. There the authors
work with the function $z\mapsto \phi_z(AB)$ to obtain a bound on $|\phi_1(AB)-\phi_0(AB)|$
for a pair of elements $A\in \a(I_a)$ and $B\in \a(I_b)$
thereby proving the conformal cluster theorem for a M\"obius covariant net. However, 
their estimate involves the product of norms $\|A\|\, \|B\|$; when it is reformulated 
for an element $X$ of the considered form \eqref{genericX}, we get some bounds in terms of
$\sum_{k}\|A_k\|\, \|B_k\|$, rather than in terms of the norm of $X$. Hence their 
method does not give a useful estimate on $\|\phi_1-\phi_0\|$.
In fact, they {\it cannot} obtain anything that would imply the split property:
this is because they only use M\"obius covariance, and as was mentioned in the introduction,
counterexamples to the split property exist when diffeomorphism covariance is not assumed \cite[Section 6]{CW}. 

Instead, our idea is the following: using diffeomorphism covariance and in particular
the decompositions of $r^{L_0}$ established in the previous sections, we can show that $\phi_z$
depends norm-continuously on $z$ and is normal (i.e.\! extends to a normal linear functional of the von Neumann algebra $\a(I_a)\vee\a(I_b)$)
when $z$ is in a certain region. Unfortunately, the region directly obtainable by such decompositions do not contain the desired point $z=0$.
However, if this region contains a ring encircling the point $z=0$ (and as we shall see, this will exactly be the case)
we can use general complex analytic arguments (essentially the Cauchy theorem) to deduce normality of $\phi_0$:

\begin{lemma}
\label{cauchy}
Let $r_0\in (0,1)$ be a fixed radius and suppose that $\phi_z$ is normal whenever
$|z|=r_0$ and that on the circle with radius $r_0$, $r_0S^1\ni z \mapsto \phi_z$ is norm-continuous. Then $\phi_0$ is
also normal.
\end{lemma}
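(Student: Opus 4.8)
The plan is to exploit the analyticity of $z \mapsto \phi_z(X)$ on the disk $\d_1$ established just before the lemma, and to combine it with the hypotheses (normality and norm-continuity of $\phi_z$ on the circle of radius $r_0$) via the Cauchy integral formula. The key observation is that normality is preserved under limits in norm: the set of normal functionals on a von Neumann algebra $\M$ is precisely the predual $\M_*$, which is a \emph{norm-closed} subspace of the full dual $\M^*$. So if I can exhibit $\phi_0$ as a norm-limit (or, better, a norm-convergent integral) of the normal functionals $\{\phi_z : |z| = r_0\}$, then $\phi_0$ will automatically be normal.

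Concretely, I would proceed as follows. For each fixed $X \in \a(I_a)\vee_{\mathrm{alg}}\a(I_b)$, the scalar function $z \mapsto \phi_z(X)$ is analytic on $\d_1$, so by the Cauchy integral formula
\begin{equation*}
\phi_0(X) = \frac{1}{2\pi i}\oint_{|z|=r_0} \frac{\phi_z(X)}{z}\,dz
= \frac{1}{2\pi}\int_0^{2\pi} \phi_{r_0 e^{i\theta}}(X)\, d\theta .
\end{equation*}
The right-hand side suggests defining a functional $\psi$ on the von Neumann algebra $\M := \a(I_a)\vee\a(I_b)$ by the vector-valued (Bochner) integral $\psi := \frac{1}{2\pi}\int_0^{2\pi} \phi_{r_0 e^{i\theta}}\, d\theta$, where the integrand $\theta \mapsto \phi_{r_0 e^{i\theta}}$ is viewed as a map into the predual $\M_*$. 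By hypothesis each $\phi_{r_0 e^{i\theta}}$ lies in $\M_*$ and $\theta \mapsto \phi_{r_0 e^{i\theta}}$ is norm-continuous, hence the Bochner integral exists and, since $\M_*$ is norm-closed (indeed a Banach space), $\psi \in \M_*$, i.e.\! $\psi$ is normal.

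The last step is to identify $\psi$ with $\phi_0$. By construction $\psi$ and $\phi_0$ agree on the dense $*$-subalgebra $\a(I_a)\vee_{\mathrm{alg}}\a(I_b)$: evaluating $\psi$ on such an $X$ commutes with the scalar integral above because evaluation at $X$ is a bounded linear functional on $\M_*$, and the scalar integral reproduces $\phi_0(X)$ by Cauchy's formula. Since $\psi$ is normal (hence weakly-$*$ continuous) and $\phi_0$ agrees with it on a $\sigma$-weakly dense subalgebra, one concludes $\phi_0 = \psi$ is normal. \textbf{The main obstacle} I anticipate is the interchange of integration and evaluation, together with the measurability/Bochner-integrability of the $\M_*$-valued integrand: one must verify that norm-continuity of $\theta\mapsto\phi_{r_0e^{i\theta}}$ genuinely places the integrand in $\M_*$ with a well-defined Bochner integral, and that the resulting normal functional truly restricts to $\phi_0$ on the algebra rather than merely on the generators of the scalar identity. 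This is exactly where the hypotheses of the lemma (normality \emph{on the whole circle} and norm-continuity) are indispensable, since pointwise analyticity of the scalars alone would not suffice to conclude normality of the limiting functional.
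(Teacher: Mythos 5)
Your proposal is correct and follows essentially the same route as the paper's proof: write $\phi_0(X)$ via the Cauchy integral formula over the circle $r_0 S^1$, then use the norm-continuity hypothesis together with the norm-closedness of the space of normal functionals to realize the resulting integral as a normal functional on $\a(I_a)\vee\a(I_b)$ restricting to $\phi_0$ (the paper uses Riemann sums where you use a Bochner integral, which is the same argument). The only detail the paper makes explicit, which you implicitly assume, is that by the Kaplansky density theorem the norm of a normal functional on $\a(I_a)\vee\a(I_b)$ equals the norm of its restriction to $\a(I_a)\vee_{\mathrm{alg}}\a(I_b)$, so the hypothesized norm-continuity may indeed be read as continuity in the predual norm.
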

\begin{proof}
We shall use some well-known technical facts.
In particular, we shall exploit that the norm-limit of a sequence of normal functionals on a von Neumann algebra $\M$ is normal
(see e.g.\! \cite[Corollary 7.1.13]{KR}).
To apply this fact, one should note that the norm is defined on the von Neumann algebra $\M$, but
 by the Kaplansky density theorem, the norm of a normal functional on $\a(I_a)\vee \a(I_b)$ is equal
 to the norm of its restriction to $\a(I_a)\vee_{\mathrm{alg}}\a(I_b)$.
 Therefore, in the following we do not distinguish them.

Thus one has --- e.g.\! by considering Riemann-sums ---
that if $\varphi:[s_1,s_2]\ni t\mapsto \varphi_t\in \M_*$ is a norm-continuous family of normal linear functionals,
then $\varphi(\cdot)=\int_{s_1}^{s_2}\varphi_t(\cdot) dt$ is also a well-defined normal functional on $\M$.

Since $\d_1\ni z\mapsto \phi_z(X)$ is analytic, by the Cauchy integral formula we have
\begin{equation*}
\phi_0(X) = \frac{1}{2\pi i}\oint_{r_0S^1} \phi_{r_0 e^{i\theta}}(X)\; \frac{dz}z
\end{equation*}
for every $X\in \a(I_a)\vee_{\mathrm{alg}}\a(I_b)$.

\end{proof}

Let us now discuss how the decompositions $r^{L_0}$ help us out in different regions of $\d_1$. Let $I_c=I_a'$, and $I_d$ be an (open) interval containing the closure of $I_a$ but not intersecting $I_b$. (Such an ``enlargement'' of $I_b$ exists as $I_a$ and $I_b$ were assumed to have a positive distance from each other). We then have that $I_c\cup I_d=S^1$ and we can consider the decomposition of $r^{L_0}$ given by Proposition \ref{prop:dec1} with  $\{C_r\}_{r\in(0,1)}\subset \a(I_c)$ and $\{D_r\}_{r\in(0,1)}\subset \a(I_d)$.
Let us denote by $R_\theta$ the rotation by $\theta$.
Then, as long as $z=r e^{i\theta}$ is such that $r\in (0,1)$ and the angle $\theta$ satisfies the condition
\begin{equation}\label{eq:theta1set}
\overline{I_d}\cap R_\theta (\overline{I_b}) =\emptyset,
\end{equation}
we have that the action of $\rho_\theta \equiv \Ad e^{i\theta L_0}$ leaves $\a(I_b)$ inside $\a(I_d)'$ and thus 
for an $X \in\a(I_a)\vee_{\mathrm{alg}} \a(I_b)$ 
with decomposition (\ref{genericX}), we can use {\it locality} to rewrite $\phi_z(X)$  as
\begin{eqnarray}
\nonumber
\phi_z(X)&=&\sum_{k=1}^n\left\langle\Omega,A_k r^{L_0} e^{i\theta L_0} B_k\Omega\right\rangle
=\sum_{k=1}^n\left\langle\Omega,A_k r^{L_0} e^{i\theta L_0} B_k e^{-i\theta L_0}\Omega\right\rangle
\\
\label{eq:phistep1}
&=&\sum_{k=1}^n\left\langle\Omega,A_k C_rD_r \rho_\theta(B_k)\Omega\right\rangle
=\left\langle C_r^*\Omega, \left(\sum_{k=1}^n A_k \rho_\theta(B_k)\right)D_r\Omega\right\rangle.
\end{eqnarray}
Let $(\theta_-,\theta_+)$ be the largest open interval of angles satisfying our condition \eqref{eq:theta1set} and containing $0$,
i.e., $\theta_\pm$ is the smallest positive / largest negative angle for which $R_{\theta_\pm}(\overline{I_b})$ intersects $\overline{I_d}$.
We can obviously find a smooth function $f:S^1\to \RR$ such that $f$ on $I_a$ is zero, but is constant $1$ on the complement of $I_d$ (i.e.\! on the complement of the ``enlarged'' version of $I_a$).
Viewing $f$ as the vector field on $S^1$ formally written as $f(e^{i\theta})\frac{\rm d}{{\rm d}\theta}$, it gives rise to a one parameter group of diffeomorphisms  
\begin{equation*}
\RR\ni\theta\mapsto \gamma_\theta \equiv \,\exp(\theta f)
\end{equation*}
such that $\gamma_\theta$ is ``localized'' in $I_a'$, but if $\theta_-<\theta<\theta_+$, then the action of $\gamma_\theta$ on $I_b$
coincides with that of the rotation by $\theta$. Thus, $e^{i\theta T(f)}$ commutes with elements of $\a(I_a)$
but for $\theta \in (\theta_-,\theta_+)$, its adjoint action on $\a(I_b)$ coincides with the action of $\rho_\theta$ and hence 
we can write
\begin{eqnarray*}\nonumber
\sum_{k=1}^nA_k\rho_\theta (B_k)&=&
\sum_{k=1}^nA_k e^{i\theta T(f)}B_k e^{-i\theta T(f)} = 
e^{i\theta T(f)} \sum_{k=1}^nA_k B_k e^{-i\theta T(f)} = 
\\
\label{eq:sums}
&=&e^{i\theta T(f)} X e^{-i\theta T(f)}.
\end{eqnarray*} 
Putting this back in \eqref{eq:phistep1}, we get that for $\theta \in (\theta_-,\theta_+)$, 
\begin{equation*}\label{eq:decdec1}
\phi_{re^{i\theta}}(X)=\langle \eta_{\theta},X \zeta_{\theta}\rangle
\end{equation*} 
where the vectors $\eta_{\theta} = e^{-i\theta T(f)}C_r^*\Omega$
and $\zeta_{\theta} =  e^{-i\theta T(f)}D_r\Omega$.
\begin{corollary}\label{co:cont1}
$\{\phi_z\}$ is a norm-continuous family of  normal functionals in the region $\{re^{i\theta}| r\in(0,1)\;\;{\rm  and  }\;\; \overline{I_d}\cap R_\theta(\overline{I_b}) = \emptyset \}.$  
\end{corollary}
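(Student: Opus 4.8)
The plan is to read off both claimed properties—normality and norm-continuity—directly from the explicit two-vector representation
\[
\phi_{re^{i\theta}}(X)=\langle \eta_\theta, X\,\zeta_\theta\rangle,\qquad \eta_\theta = e^{-i\theta T(f)}C_r^*\Omega,\quad \zeta_\theta = e^{-i\theta T(f)}D_r\Omega,
\]
which has just been established for every $z=re^{i\theta}$ in the stated region. The first observation is that normality is essentially free: a functional of the form $X\mapsto\langle\xi,X\eta\rangle$ given by two fixed vectors of $\h$ is a vector functional, hence automatically $\sigma$-weakly continuous, and therefore extends to a normal functional on the von Neumann algebra $\a(I_a)\vee\a(I_b)$. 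So for each admissible $z$ the functional $\phi_z$ is normal, and it only remains to control how $\phi_z$ varies with $z$.

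For norm-continuity I would estimate $\|\phi_{z}-\phi_{z'}\|$ in terms of $\|\eta_\theta-\eta_{\theta'}\|$, $\|\zeta_\theta-\zeta_{\theta'}\|$ and the corresponding differences at the level of $C_r,D_r$. Concretely, using the two-vector form one has the standard bound
\[
\|\phi_z-\phi_{z'}\|\le \|\eta_\theta\|\,\|\zeta_\theta-\zeta_{\theta'}\| + \|\eta_\theta-\eta_{\theta'}\|\,\|\zeta_{\theta'}\|,
\]
so it suffices to show that $(r,\theta)\mapsto\eta_\theta$ and $(r,\theta)\mapsto\zeta_\theta$ are norm-continuous maps into $\h$ on the region. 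The $\theta$-dependence enters only through the strongly continuous one-parameter unitary group $\theta\mapsto e^{-i\theta T(f)}$ (recall $T(f)$ is self-adjoint by the results quoted in the preliminaries), which is jointly strongly continuous, so $\theta\mapsto e^{-i\theta T(f)}\xi$ is norm-continuous for any fixed vector $\xi$. The $r$-dependence enters through $C_r^*\Omega$ and $D_r\Omega$, and by Proposition \ref{prop:dec1} the families $r\mapsto C_r$ and $r\mapsto D_r$ are norm-continuous in $\a(I_c)$ and $\a(I_d)$ respectively; applying them to the fixed vector $\Omega$ yields norm-continuity of $r\mapsto C_r^*\Omega$ and $r\mapsto D_r\Omega$ in $\h$. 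Combining these two continuities—for instance by writing $\eta_\theta-\eta_{\theta'}$ as a sum of a term measuring the change in the unitary and a term measuring the change in $C_r^*\Omega$, and using $\|e^{-i\theta T(f)}\|=1$—gives joint continuity of the vector-valued maps, and hence the desired norm-continuity of $\phi_z$.

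The one point that deserves genuine care, rather than the routine estimates above, is that the representation $\phi_{re^{i\theta}}(X)=\langle\eta_\theta,X\zeta_\theta\rangle$ is only valid when $\theta$ lies in the interval $(\theta_-,\theta_+)$ on which $\gamma_\theta$ reproduces the rotation $\rho_\theta$ on $I_b$; the region in the statement, however, is described by the symmetric geometric condition $\overline{I_d}\cap R_\theta(\overline{I_b})=\emptyset$, which may include admissible angles beyond this interval. I expect the main obstacle to be checking that the geometric condition \eqref{eq:theta1set} is exactly what guarantees, via locality, that $\rho_\theta(B_k)\in\a(I_d)'$, so that the derivation of \eqref{eq:phistep1} goes through for every such $\theta$; once the two-vector formula holds on the full region, the continuity argument is uniform and local in $z$ and therefore globalizes immediately.
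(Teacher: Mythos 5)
Your main argument is exactly the paper's: once the representation $\phi_{re^{i\theta}}(X)=\langle\eta_\theta, X\zeta_\theta\rangle$ with $\eta_\theta = e^{-i\theta T(f)}C_r^*\Omega$ and $\zeta_\theta = e^{-i\theta T(f)}D_r\Omega$ is in hand, normality is automatic (vector functionals are $\sigma$-weakly continuous), and norm-continuity in $z=re^{i\theta}$ follows from the norm-continuity of $r\mapsto C_r, D_r$ guaranteed by Proposition \ref{prop:dec1} together with the strong continuity of the unitary group $\theta\mapsto e^{-i\theta T(f)}$, via the standard estimate you wrote. That part is correct and is precisely how the corollary is meant to follow from the preceding computation.

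Your last paragraph, however, contains a genuine confusion. Equation \eqref{eq:phistep1} is \emph{not} a two-vector representation of $\phi_z$: the operator sandwiched between $C_r^*\Omega$ and $D_r\Omega$ there is $\sum_{k} A_k\rho_\theta(B_k)$, i.e.\ the image of $X$ under the algebraic isomorphism $\sum_k A_kB_k\mapsto\sum_k A_k\rho_\theta(B_k)$, not $X$ itself. The functional $X\mapsto\langle C_r^*\Omega,\left(\sum_k A_k\rho_\theta(B_k)\right)D_r\Omega\rangle$ is normal on $\a(I_a)\vee\a(I_b)$ only if that algebraic isomorphism extends normally to the von Neumann algebra --- and proving such normality is exactly the kind of statement the whole section is designed to establish, so invoking it would be circular. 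What converts \eqref{eq:phistep1} into a genuine vector functional of $X$ is the replacement of $\rho_\theta$, acting on the $B_k$'s, by $\Ad e^{i\theta T(f)}$ with a unitary commuting with $\a(I_a)$; and this replacement is valid only for $\theta\in(\theta_-,\theta_+)$, where the flow of $f$ reproduces the rotation on $I_b$. So your fallback plan for covering ``admissible angles beyond this interval'' would fail. Fortunately, no such angles exist: the condition $\overline{I_d}\cap R_\theta(\overline{I_b})=\emptyset$ means $R_\theta(\overline{I_b})\subset S^1\setminus\overline{I_d}$, and the set of rotations carrying the closed arc $\overline{I_b}$ into the fixed open arc $S^1\setminus\overline{I_d}$ is, modulo $2\pi$, a single open arc of angles (writing the condition in lifted endpoint coordinates it becomes a pair of inequalities defining one interval), which contains $0$; hence it coincides with $(\theta_-,\theta_+)$ modulo $2\pi$. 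Since $z=re^{i\theta}$ depends on $\theta$ only modulo $2\pi$, every point of the region in the statement already admits a representative angle in $(\theta_-,\theta_+)$, the two-vector formula covers the whole region, and your continuity estimates then close the proof.
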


Note that Proposition \ref{prop:dec1} gives some bounds on the norms of $C_r$ and $D_r$, and so actually with the constant $q>0$ defined there, in the discussed region
we have
\begin{equation*}
\|\phi_{re^{i\theta}}\|\leq \| C_r^*\Omega\|\; \|D_r\Omega\|
\leq\frac1{r^{2q}}.
\end{equation*}
Unfortunately, though this estimate is nicely uniform in $\theta$, it ``blows up'' at $r\to 0$ and hence in itself it does not show that $\phi_r$ converges to $\phi_0$ in norm as $r\to 0$. 

However, so far we have only used our first decomposition of $r^{L_0}$.
We shall now exploit the second one derived in Section 4.
Let us now consider the decomposition given by Proposition \ref{prop:dec2} for a certain (fixed) $r_0 \in (0,1)$;
that is, we have two bounded elements $C\in \a(I_c)$ and $D\in \a(I_d)$ and $g\neq {\rm id}$
a M\"obius transformation such that $r_0^{L_0} = CD U(g)$ (recall that this is valid in the proper sense,
namely one can fix the phase of $U(g)$ for $g \in \mob$ unambiguously).
Then, repeating the steps we did before with our previous decomposition and setting instead of \eqref{eq:phistep1}, this time we get
\begin{equation*}
\phi_{r_0e^{i\theta}}(X) = \sum_{k=1}^n\left\langle \Omega, A_k CD U(g)e^{i\theta L_0}B_k\Omega\right\rangle
=\left\langle C^*\Omega,\left(\sum_{k=1}^n A_k (\Ad U(g)\circ\rho_\theta)(B_k) \right)D\Omega\right\rangle
\end{equation*}
whenever the disjointness condition 
\begin{equation*}
\overline{I_d}\cap g \circ R_\theta (\overline{I_b}) =\emptyset
\end{equation*} 
holds.
We claim that there is a continuous family of diffeomorphisms parametrized by $\theta$ which coincide with
$g \circ R_\theta$ on $I_b$ and are trivial on $I_d$ (this time we should take diffeomorphisms not necessarily of the form
$\Exp(\theta f)$): indeed, we can identify these diffeomorphisms with a common fixed point
with smooth functions on $\RR/2\pi\ZZ$ onto $\RR/2\pi\ZZ$ with strictly positive derivatives,
and then with strictly positive functions on $\RR/2\pi\ZZ$ with integral $2\pi$.
Then for given derivatives on small neighborhoods of $\overline{I_b}$ and $\overline{I_d}$, we can fill the complement
of $I_b \cup I_d$ by a smooth positive function with the correct integrals (so that the left boundary point of $I_b$
is mapped to the correct point).
Note that the last step is possible since $(g\circ R_\theta)I_b\cap  I_d = \emptyset$
and the lenght of $(g\circ R_\theta)I_b\cup  I_d$ is less then $2\pi$.
This gives a required continuous family.
Now we can continue exactly as in the first case, and hence this time obtain the following.
\begin{corollary}\label{co:cont2}
$\{\phi_z\}$ is a norm-continuous family of  normal functionals in the region $\{r_0 e^{i\theta}| \overline{I_d}\cap g\circ R_\theta(\overline{I_b}) = \emptyset \}.$  
\end{corollary}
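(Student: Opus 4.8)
The plan is to repeat, essentially verbatim, the argument that produced Corollary~\ref{co:cont1}, the only new ingredient being the M\"obius factor $U(g)$ supplied by Proposition~\ref{prop:dec2}. Using $(r_0e^{i\theta})^{L_0}=r_0^{L_0}e^{i\theta L_0}=CD\,U(g)e^{i\theta L_0}$ together with $U(g)e^{i\theta L_0}\Omega=\Omega$, one first rewrites $U(g)e^{i\theta L_0}B_k\Omega=\widehat B_k\,\Omega$, where $\widehat B_k:=\Ad\!\big(U(g)e^{i\theta L_0}\big)(B_k)\in\a\big(g\circ R_\theta(I_b)\big)$. Under the standing disjointness hypothesis $\overline{I_d}\cap g\circ R_\theta(\overline{I_b})=\emptyset$ the operator $\widehat B_k$ commutes with $D\in\a(I_d)$ by locality, while $A_k\in\a(I_a)$ commutes with $C\in\a(I_c)=\a(I_a')$; hence $\phi_{r_0e^{i\theta}}(X)=\big\langle C^*\Omega,\big(\textstyle\sum_k A_k\widehat B_k\big)D\Omega\big\rangle$. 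As in the first case, it remains only to realize the map $B_k\mapsto\widehat B_k$ by a single unitary that fixes $\a(I_a)$.

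To this end I would construct, for each admissible $\theta$, a diffeomorphism $\psi_\theta\in{\rm Diff}^+(I_a')$ (i.e.\ acting identically on $I_a$) that coincides with the M\"obius map $g\circ R_\theta$ on a neighbourhood of $\overline{I_b}$. Such a $\psi_\theta$ exists because $\overline{I_a}\subset I_d$ and the disjointness hypothesis together force both $\overline{I_b}$ and $g\circ R_\theta(\overline{I_b})$ to lie in the single open arc $I_a'$; an orientation-preserving interpolation of the embedding $g\circ R_\theta|_{\overline{I_b}}$ to all of $I_a'$, equal to the identity near $\partial I_a$, is then elementary, and can be chosen to depend continuously on $\theta$ on each connected component of the admissible set of angles. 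Putting $W_\theta:=U(\psi_\theta)$, the conformal covariance axiom gives $\Ad W_\theta(A_k)=A_k$, since $\psi_\theta$ is the identity on $I_a$ and hence $W_\theta$ acts trivially on $\a\big((I_a')'\big)=\a(I_a)$. Moreover $\psi_\theta^{-1}\circ(g\circ R_\theta)$ is the identity on $I_b$, so it lies in ${\rm Diff}^+(I_b')$; applying the covariance axiom once more and passing to $\Ad$ (which kills the projective phase) yields $\Ad W_\theta|_{\a(I_b)}=\Ad\!\big(U(g)e^{i\theta L_0}\big)|_{\a(I_b)}$, that is $\Ad W_\theta(B_k)=\widehat B_k$.

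Consequently $\sum_k A_k\widehat B_k=\Ad W_\theta(X)=W_\theta XW_\theta^*$, and therefore
\[
\phi_{r_0e^{i\theta}}(X)=\big\langle C^*\Omega,\,W_\theta XW_\theta^*\,D\Omega\big\rangle=\langle\eta_\theta,\,X\,\zeta_\theta\rangle,\qquad \eta_\theta:=W_\theta^*C^*\Omega,\quad\zeta_\theta:=W_\theta^*D\Omega .
\]
This exhibits $\phi_{r_0e^{i\theta}}$ as a vector functional on $\a(I_a)\vee\a(I_b)$, hence normal. For the norm-continuity I would fix, on each component of the admissible set, a strongly continuous lift $\theta\mapsto W_\theta$ of the projective representation along the continuous path $\theta\mapsto\psi_\theta$; then $\eta_\theta,\zeta_\theta$ are norm-continuous in $\theta$ (the vectors $C^*\Omega$ and $D\Omega$ being fixed), and the elementary estimate $\|\phi_{r_0e^{i\theta}}-\phi_{r_0e^{i\theta'}}\|\le\|\eta_\theta-\eta_{\theta'}\|\,\|\zeta_\theta\|+\|\eta_{\theta'}\|\,\|\zeta_\theta-\zeta_{\theta'}\|$ finishes the argument. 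Note that $\langle\eta_\theta,X\zeta_\theta\rangle$ is insensitive to a common phase on $(\eta_\theta,\zeta_\theta)$, so the functional does not depend on the choice of lift.

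The only genuinely non-routine point is the construction of the family $\psi_\theta$: one must verify that the interpolation of $g\circ R_\theta$ with the identity can be carried out simultaneously for all $\theta$ in a component, smoothly enough that $\theta\mapsto\psi_\theta$ is continuous in ${\rm Diff}^+(S^1)$, and that a continuous phase lift of $U(\psi_\theta)$ along it then exists. Everything else is a line-by-line transcription of the proof behind Corollary~\ref{co:cont1}, which is exactly why the normality and norm-continuity claimed in the statement hold on the whole region.
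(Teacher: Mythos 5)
Your proposal is correct and is essentially the paper's own argument: the paper likewise uses the decomposition $r_0^{L_0}=CD\,U(g)$ from Proposition \ref{prop:dec2} together with $U(g)e^{i\theta L_0}\Omega=\Omega$ and locality under the disjointness condition to arrive at $\phi_{r_0e^{i\theta}}(X)=\langle C^*\Omega,\bigl(\sum_k A_k\widehat B_k\bigr)D\Omega\rangle$, and then simply asserts that one ``can continue exactly as in the first case.'' Your construction of the interpolating diffeomorphism $\psi_\theta$ (trivial on $I_a$, equal to $g\circ R_\theta$ near $\overline{I_b}$) together with a continuous phase lift of $U(\psi_\theta)$ is precisely the intended filling-in of that omitted exercise, playing the role that $e^{i\theta T(f)}$ played in Corollary \ref{co:cont1}.
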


Does the union of the two treated regions encircle the point $0$? This might not be the case. However, note that the M\"obius transformation $g$ given by Proposition \ref{prop:dec2} is an ``absolute'' one; i.e.\! it does not depend on the intervals $I_c$ and $I_d$ (whereas of course the elements $C$
and $D$ obviously do). And though for some choices of $I_a, I_b$ and $I_d\supset \overline{I_a}$ might lead to nowhere, it is enough for us to show that {\it there is} a ``right'' choice.

\begin{theorem}
A conformal net $(\a,U,\Omega)$ on the circle automatically has the split property.
\end{theorem}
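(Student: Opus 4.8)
The plan is to deduce the theorem from Proposition~\ref{prop:split} by supplying the single missing ingredient: the normality of the product vacuum functional $\phi_0$. Fix an inclusion $I_1 \Subset I_2$ and put $\N = \a(I_1)$, $\M = \a(I_2)$; by Haag duality $\M' = \a(I_2')$, so with $I_a := I_1$ and $I_b := I_2'$ the pair $(I_a, I_b)$ consists of two arcs at positive distance and $\N \vee \M' = \a(I_a) \vee \a(I_b)$. This inclusion is standard by the Reeh--Schlieder property, $\N \vee \M'$ is a factor by the factoriality of two-interval algebras (item 14), and the candidate state $\phi_0$ restricts to the vacuum state on each of $\a(I_a)$ and $\a(I_b)$ — hence to \emph{faithful} states by Reeh--Schlieder — while satisfying $\phi_0(AB) = \omega(A)\omega(B)$. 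Thus, once $\phi_0$ is shown to be normal, Proposition~\ref{prop:split} yields that $\N \subset \M$ is split. Since $\Ad U(\gamma)$ carries the split property of one inclusion to any diffeomorphic one, and since $\diff$ acts transitively on cyclically ordered quadruples of points of $S^1$, every inclusion $I_1 \Subset I_2$ is equivalent to a single convenient one; it therefore suffices to prove the normality of $\phi_0$ for one well-chosen pair $(I_a, I_b)$.

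The heart of the argument is then geometric, and I would exploit that the Möbius element $g \neq \mathrm{id}$ furnished by Proposition~\ref{prop:dec2} is \emph{absolute}, i.e.\ independent of the auxiliary intervals $I_c = I_a'$ and $I_d \supset \overline{I_a}$. Working at the fixed radius $r_0$ of that proposition, Corollary~\ref{co:cont1} shows $\phi_{r_0 e^{i\theta}}$ is normal and norm-continuous for all $\theta$ in
\[
\Theta_1 = \{\theta : R_\theta(\overline{I_b}) \cap \overline{I_d} = \emptyset\},
\]
while Corollary~\ref{co:cont2}, after rewriting $g\circ R_\theta(\overline{I_b}) \cap \overline{I_d} = \emptyset$ as $R_\theta(\overline{I_b}) \cap g^{-1}(\overline{I_d}) = \emptyset$, gives the same conclusion for all $\theta$ in
\[
\Theta_2 = \{\theta : R_\theta(\overline{I_b}) \cap g^{-1}(\overline{I_d}) = \emptyset\}.
\]
I would choose a point $p \in S^1$ with $g^{-1}(p) \neq p$ (possible since a nontrivial Möbius transformation has at most two boundary fixed points), take $I_a$ a very short arc around $p$ with $I_d$ a tight enlargement of $\overline{I_a}$, and keep $I_b$ short with the required gap. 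Then $\overline{I_d}$ sits near $p$ and $g^{-1}(\overline{I_d})$ near $g^{-1}(p)$, separated by a fixed positive distance, so the two ``bad'' angular arcs — the complements of $\Theta_1$ and $\Theta_2$, of widths controlled respectively by $|I_b|+|I_d|$ and $|I_b|+|g^{-1}I_d|$ — become disjoint once all arcs are short enough. Consequently $\Theta_1 \cup \Theta_2$ exhausts all angles and the two regions overlap near the transitions.

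Finally, on the overlap $\Theta_1 \cap \Theta_2$ both corollaries describe the \emph{same} functional (as $\phi_z$ does not depend on the chosen decomposition), so the two normal, norm-continuous families glue to a single normal, norm-continuous family along the entire circle $r_0 S^1$. Lemma~\ref{cauchy} then applies verbatim and delivers the normality of $\phi_0$, completing the proof. The one genuinely delicate point is the geometric step of the second paragraph: it is the nontriviality of the absolute transformation $g$ — equivalently, the separation of $g^{-1}(\overline{I_d})$ from $\overline{I_d}$ — that shifts the good region $\Theta_2$ away from $\Theta_1$, so that the two arcs together encircle the origin. A single decomposition yields only one arc, which can never surround $z = 0$; this is precisely where full diffeomorphism covariance, through Proposition~\ref{prop:dec2}, goes beyond what M\"obius covariance provides.
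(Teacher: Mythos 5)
Your proposal is correct and takes essentially the same route as the paper's own proof: the same reduction by covariance to almost point-like intervals, the same combination of Corollaries~\ref{co:cont1} and~\ref{co:cont2} in which the nontriviality of the \emph{absolute} M\"obius element $g$ from Proposition~\ref{prop:dec2} forces the two bad angular arcs apart so that the good regions cover the circle $r_0 S^1$, and the same conclusion via Lemma~\ref{cauchy} and Proposition~\ref{prop:split}. The only point you pass over silently is that $\phi_0$ must be shown to be \emph{positive} on the von Neumann algebra $\a(I_a)\vee\a(I_b)$ (so that it is a normal \emph{state}, as Proposition~\ref{prop:split} requires); the paper devotes the technical Lemma~\ref{lm:state} to this, and your argument should cite or reproduce it.
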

\begin{proof}
By conformal covariance, we may assume that $I_a, I_b$ and even the enlargement $I_d\supset \overline{I_a}$
are ``tiny''; almost point-like intervals around two points which we will conveniently call $a$ and $b$.
Then the region guaranteed by Corollary \ref{co:cont1} is $\d_1$ minus a slightly enlarged version of the half-line $\{te^{i\alpha}| t\geq 0\}$
where $\alpha$ is the  angle for which $R_\alpha(b)=a$.
On the other hand, the region guaranteed by Corollary \ref{co:cont2} is the circle $r_0 S^1$ minus a slightly enlarged version of
the point $r_0 e^{i\tilde{\alpha}}$, where $\tilde{\alpha}$ is the angle for which $g\circ R_{\tilde{\alpha}}(b)=a$,
which is of course equivalent to saying that $R_{\tilde{\alpha}}(b)=g^{-1}(a)$.
Since $g$ is a certain fixed, non trivial M\"obius transformation, we might even assume that our choice of $a$ is such that $g^{-1}(a)\neq a$. Then $\alpha\neq \tilde{\alpha}$ and the union of the two regions covers the circle $r_0 S^1$.

Lemma \ref{cauchy} shows that $\phi_0$ is a normal linear functional on $\a(I_a)\vee \a(I_b)$.
Now our claim is concluded by Proposition \ref{prop:split} and a technical Lemma \ref{lm:state} below,
by noting that
\begin{itemize}
 \item $\a(I_a)\vee \a(I_b)$ is a factor (Section \ref{preliminaries}, Factoriality of two-interval algebras)
 \item The restrictions of $\phi_0$ to $\a(I_a)$ and $\a(I_b)$ are equal to the vacuum state,
 hence faithful.
\end{itemize}
\end{proof}

 
\begin{lemma}\label{lm:state}
 $\phi_0$ is a positive normal functional on  $\a(I_a)\vee\a(I_b)$.
\end{lemma}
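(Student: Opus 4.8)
The plan is to show that $\phi_0$ is a positive normal functional on $\a(I_a)\vee\a(I_b)$. Normality has already been established by Lemma \ref{cauchy} together with the covering argument in the main theorem, since the two norm-continuous families of normal functionals from Corollaries \ref{co:cont1} and \ref{co:cont2} cover a circle $r_0 S^1$ encircling the origin, and the Cauchy integral formula then expresses $\phi_0$ as an integral of normal functionals. So the real content of this lemma is \emph{positivity}.

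For positivity, the key observation is the explicit product structure of $\phi_0$ on $\a(I_a)\odot\a(I_b)$. I would start from the formula
\[
\phi_0\left(\sum_{k=1}^n A_k B_k\right)=\sum_{k=1}^n \omega(A_k)\omega(B_k),
\]
which exhibits $\phi_0$ as the product state $\omega_{I_a}\otimes\omega_{I_b}$ under the natural isomorphism $\a(I_a)\vee_{\mathrm{alg}}\a(I_b)\simeq \a(I_a)\odot\a(I_b)$ coming from the fact that the two algebras are commuting factors (the tensor product identification already invoked via \cite[Proposition IV.4.20]{Takesaki}). Since $\omega=\langle\Omega,\,\cdot\,\Omega\rangle$ is a positive (indeed vector) state on each of $\a(I_a)$ and $\a(I_b)$, the product functional is positive on the algebraic tensor product: for $X=\sum_k A_k B_k$ one computes $\phi_0(X^*X)$ and recognizes it as $(\omega\otimes\omega)(X^*X)\ge 0$, because the product of two positive linear functionals on commuting algebras is positive on their algebraic tensor product. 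Concretely, I would check positivity directly by writing $X^*X$ in the decomposed form and using that $\omega$ is positive together with the commutation $[\a(I_a),\a(I_b)]=0$; this is the standard verification that a product state on a tensor product of $C^*$-algebras is a state.

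Once positivity is verified on the $*$-algebra $\a(I_a)\vee_{\mathrm{alg}}\a(I_b)$, I would combine it with the already-established normality: a normal functional that is positive on the weakly dense $*$-subalgebra $\a(I_a)\vee_{\mathrm{alg}}\a(I_b)$ extends to a positive normal functional on the von Neumann algebra $\a(I_a)\vee\a(I_b)$, since positivity is preserved under the weak continuity guaranteed by normality (equivalently, normality gives a $\sigma$-weakly continuous extension, and positivity on a $\sigma$-weakly dense $*$-subalgebra passes to the closure). This yields that $\phi_0$ is a positive normal functional, as claimed.

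The only mildly delicate point — and thus the main thing to be careful about — is the interplay between the algebraic and von Neumann completions: one must ensure that the positivity checked on $\a(I_a)\vee_{\mathrm{alg}}\a(I_b)$ genuinely transfers to $\a(I_a)\vee\a(I_b)$ under the normal extension, rather than merely on the algebraic level. This is harmless here because normality was already proved independently in Lemma \ref{cauchy}, so one is extending a functional that is simultaneously known to be normal \emph{and} positive on a $\sigma$-weakly dense domain; by Kaplansky density (already used in Lemma \ref{cauchy} to identify the norm on the algebra with that on its algebraic core) the positive cone of the subalgebra is $\sigma$-weakly dense in that of the von Neumann algebra, so positivity is preserved. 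No separability or faithfulness input is needed at this stage; faithfulness of the restrictions to $\a(I_a)$ and $\a(I_b)$ — which is what Proposition \ref{prop:split} ultimately requires — follows separately from the Reeh–Schlieder property, since those restrictions equal the vacuum state $\omega$.
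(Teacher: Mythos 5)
Your proposal is correct and follows essentially the same route as the paper: positivity comes from the product structure $\phi_0=(\omega\otimes\omega)\circ\tau^{-1}$ under the $*$-isomorphism $\tau$ of \cite[Proposition IV.4.20]{Takesaki}, and is then transported to $\a(I_a)\vee\a(I_b)$ by combining the already-established normality with the Kaplansky density theorem. The one place where you compress the paper's argument is exactly the step it treats most carefully: since $\a(I_a)\vee_{\mathrm{alg}}\a(I_b)$ is not a $C^*$-algebra, an operator-positive element of it need not be of the form $x^*x$ with $x$ in that algebra, so the paper first bridges the two notions of positivity by approximating $t\mapsto t^{1/2}$ uniformly by real polynomials $f_n$ and using norm continuity of the normal functional to get $\phi_0(a)=\lim_n \phi_0\bigl(f_n(a)^2\bigr)\ge 0$, and only then invokes Kaplansky density. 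Your shortcut --- passing directly from $\phi_0(x^*x)\ge 0$ to positivity on the von Neumann algebra --- does work, but the density claim you appeal to (``the positive cone of the subalgebra is $\sigma$-weakly dense in that of the von Neumann algebra'') must be implemented through squares for a non-norm-closed $*$-algebra: given $m\ge 0$ in the unit ball of $\a(I_a)\vee\a(I_b)$, Kaplansky density provides self-adjoint $b_\alpha$ in the unit ball of $\a(I_a)\vee_{\mathrm{alg}}\a(I_b)$ converging strongly to $m^{1/2}$, whence $b_\alpha^*b_\alpha\to m$ $\sigma$-weakly and $\phi_0(m)=\lim_\alpha\phi_0(b_\alpha^*b_\alpha)\ge 0$ by normality. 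With that one clarification your argument is complete, and it in fact shows that the paper's polynomial-approximation step can be bypassed, since only positivity on elements of the form $x^*x$ is ever used.
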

\begin{proof}
 We first consider $\phi_0$ on $\a(I_a)\vee_{\mathrm{alg}}\a(I_b)$.
 By \cite[Proposition 4.20]{Takesaki},
 the map $\tau:\sum_k x_k y_k \mapsto \sum x_k\otimes y_k$ is well-defined and is a $*$-isomorphism from
 $\a(I_a)\vee_{\mathrm{alg}}\a(I_b)$ onto $\a(I_a)\odot\a(I_b)$.
 Now the linear functional $\phi_0(\sum_k x_k y_k)$ translates into $\a(I_a)\odot\a(I_b)$
 as $\<\Omega\otimes\Omega, \cdot\;\Omega\otimes\Omega\>$.
 Namely, $\phi_0 = (\omega\otimes\omega) \circ \tau^{-1}$.
  Now, $\omega\otimes\omega$ is clearly positive, and $\tau^{-1}(x^*x) = \tau^{-1}(x)^*\tau^{-1}(x)$,
 therefore, $\phi_0(x^*x) \ge 0$ for any $x \in \a(I_a)\vee_{\mathrm{alg}}\a(I_b)$.
 
 We claim that, also on $\a(I_a)\vee_{\mathrm{alg}}\a(I_b)$, $\phi_0$ is positive
 \footnote{$\a(I_a)\vee_{\mathrm{alg}}\a(I_b)$ is not a $C^*$-algebra, in particular,
   a positive element $a \in \a(I_a)\vee_{\mathrm{alg}}\a(I_b)$ in the sense of $\b(\h)$ is not necessarily of the form $x^*x$, where $x\in \a(I_a)\vee_{\mathrm{alg}}\a(I_b)$.}.
 Indeed, take a positive element $a \in \a(I_a)\vee_{\mathrm{alg}}\a(I_b)$.
 The function $f(x) = x^\frac12$, $x \in \left[0, \|a\|\right]$ can be arbitrarily approximated by polynomials $f_n$
 with real coefficients, uniformly on $\left[0, \|a\|\right]$ and $f_n(a)^2$ tends to $a$ in norm.
 We saw that $\phi_0$ is a normal linear functionals, hence it is in particular
 continuous in norm.
 Since $\phi_0(x^*x)\geq 0$ for $x\in \a(I_a)\vee_{\mathrm{alg}}\a(I_b)$ then $\phi_0(f_n(a)^2) \ge 0$, hence $\phi_0(a) \ge 0$ by norm continuity of $\phi_0$.

 Now, by the Kaplansky density theorem and the normality of $\phi_0$, $\phi_0$ is a positive functional.

%
\end{proof}

\section{A non-split conformal net in two-dimensions}\label{non-split}
%
Conformal nets on $S^1$ constitute the building blocks of two-dimensional conformal nets.
Let us recall the relevant definitions.
A locally normal, positive energy, M\"obius covariant \textbf{representation}  $\rho$  of a conformal net $(\a,U,\Omega)$ on $S^1$
is a family of normal representations $\{\rho_I:I\in\I\}$  of the von Neumann algebras $\{\a(I): I \in \I\}$ on a fixed Hilbert space
$\h_\rho$ and a  unitary, positive energy unitary representation $U^\rho$ on $\h_\rho$
of the universal covering group of the M\"obius group $\widetilde{\mob}$ satisfying:
\begin{enumerate}
\item Compatibility: if $I_1,I_2\in\I$ and $I_1\subset I_2$ then $\rho_{I_2}|_{\a(I_1)}=\rho_{I_1}$
\item Covariance: $\Ad U^\rho(g)\circ \rho_I=\rho_{gI}\circ \Ad U(g)$, $g\in\widetilde\mob$ 
\end{enumerate}
A representation $\rho$ is irreducible if $\bigvee_{I\in\I}\rho(\a(I))=\b(\h_\rho)$.
The defining representation $\{\mathrm{id}_{\a(I)}\}$ is called the \textbf{vacuum representation}.


A representation of a conformal net $\rho$ is said to be \textbf{localizable}  in $I_0$
if $\rho_{I_0'}\simeq \mathrm{id}$, where $\simeq$ means unitary equivalence. The unitary equivalence class of $\rho$ defines a \textbf{superselection sector}, also called a \textbf{DHR (Doplicher-Haag-Roberts)  sector}
\cite{DHR69}. 
By Haag duality we have that $\rho(\A(I))\subset\A(I)$ if $I_0\subset I$. Thus we can always choose, within the sector of $\rho$,
a representation $\rho_0$ on the defining Hilbert space $\h$ such that
$\rho_{0,I_0}$ is an endomorphism of $\A(I_0)$.
If each $\rho_I$ is an automorphism of $\A(I)$, we call $\rho$ an \textbf{automorphism} of $(\A,U,\Omega)$.
Automorphisms can be composed in a natural way.

Let $(\A, U, \Omega)$ be the $\mathrm{U}(1)$-current net \cite{BMT88}.
The main ingredients are (see \cite{weinerthesis} for a more detailed review):
\begin{itemize}
 \item The Weyl operators $W(f)$ parametrized by real smooth functions $f$ on $S^1$
 which satisfy the commutation relations $W(f)W(g) = e^{\frac i 2(f,g)}W(f+g)$,
 where $(f,g) := \frac1{2\pi}\int_0^{2\pi}d\theta\,f'(e^{i\theta})g(e^{i\theta})$
 and $f'(e^{i\theta}) = \frac d {d\theta}f(e^{i\theta})$.
 \item There is a distinguished realization (``vacuum representation'') of the Weyl operators
 (which we denote again by $W(f)$) with a unitary positive energy
 representation of $\mob$ which extends to a projective unitary representation $U$ of $\diff$, and the vacuum vector $\Omega$ such that
 $\Ad U(\g) (W(f)) = W(f\circ \g)$ and $U(g)\Omega = \Omega$ if $g\in\mob$.
 \item The $\mathrm{U}(1)$-current net $\A(I) := \{W(f): \supp f \subset I\}''$.
 \item Irreducible sectors parametrized by $q \in \RR$:
 we fix a real smooth function $\varphi$ such that $\frac1{2\pi}\int_0^{2\pi}d\theta\,\varphi(e^{i\theta}) = 1$.
 The map $W(f) \to e^{iq\varphi(f)}W(f)$ extends to an automorphism $\s_{q,I}$ of $\A(I)$, where
 $\supp f \subset I$ and $\varphi(f) = \frac1{2\pi}\int_0^{2\pi}d\theta\,f(e^{i\theta})\varphi(e^{i\theta})$.
 We call this automorphism of the net $\s_q$.
 Different functions $\varphi$ with the conditions above with the same $q$ give the equivalent sectors,
 while sectors with different $q$ are inequivalent.
 It holds that $\s_q\circ\s_{q'} = \s_{q+q'}.$
 \item Each irreducible sector is covariant: the projective representation $\g\mapsto U_q(\g) := \s_q(U(\g))$
 of local diffeomorphisms extends to $\widetilde{\diff}$, hence makes the automorphism $\s_q$ covariant
 \cite[Proposition 2]{DFK04} (in an irreducible representation $\s_q$, the choice of $U_q(\g)$ is unique
 up to a scalar \cite[Remark after Proposition 2]{DFK04}):
 $\Ad U_q(\g)(\s_q(x)) = \s_q(\Ad U(\g)(x))$. Furthermore, we can fix the phase of $U_q(\g)$ and
 consider them as unitary operators (see \cite[Proposition 5.1]{FH}, where the phase does not depend on $h$,
 hence one can take the direct sum of multiplier representations (projective representations with fixed phases)).
 In this case, it holds that $U_q(\g_1)U_q(\g_2) = c(\g_1,\g_2)U(\g_1,\g_2)$
 where $c(\g_1,\g_2) \in \CC\1$. $c(\g_1,\g_2)$ can be chosen without dependence on $q$,
 and continuous in a neighborhood of the unit element.
 This projective representation (restricted to $\widetilde{\mob}$) has positive energy \cite{CW}.
 \item For two equivalent automorphisms $\rho, \tilde\rho$ localized in $I, \tilde I$, respectively,
 an operator which intertwines them is called a charge transporter. In the present case, as both $\rho, \tilde\rho$
 are irreducible, such a charge transporter is unique up to a scalar. A charge transporter acts trivially
 on $\A((I \cup \tilde I)')$, hence belongs to $\A((I \cup \tilde I)')'$. In particular, it can be considered
 as an element in a local algebra containing $I$ and $\tilde I$.
 \item The operator $z_q(\g) := U(\g)U_q(\g)^*$ is a charge transporter between
 $\s_q$ and $\alpha_\g\s_q\alpha_{\g^{-1}}$.
 \item  For a given pair of automorphisms $\rho_1, \rho_2$, one defines the braiding $\epsilon_{\rho_1, \rho_2}$:
 one chooses equivalent automorphisms $\tilde\rho_1, \tilde\rho_2$ localized in $\tilde I_1, \tilde I_2$,
 respectively, such that $\tilde I_1 \cap \tilde I_2 = \emptyset$ and charge transporters
 $V_1, V_2$ between $\rho_1$ and $\tilde \rho_1$, and $\rho_2$ and $\tilde \rho_2$,
 respectively. Define $\epsilon_{\rho_1, \rho_2}^\pm := \rho_2(V_1^*)V_2^*V_1\rho_1(V_2)$, where $+$ or $-$ depends on
 the choice whether $\tilde I_1$ is on the left/right of $\tilde I_2$ (which results from the choice of localization of the charge transporter above),
 but $\epsilon_{\rho_1, \rho_2}^\pm$ do not depend on the choice of $\tilde \rho_k, V_k$ under such a configuration.
 \item For our concrete automorphisms $\s_q, \s_{q'}$ on the $\mathrm{U}(1)$-current net,
 one can take the charge transporters $V_q, V_{q'}$ as Weyl operators and finds that the braiding satisfies
 $\epsilon_{\s_q, \s_{q'}}^\pm \in \CC\1$, $\overline{\epsilon_{\s_q, \s_{q'}}^+} = \epsilon_{\s_q, \s_{q'}}^-$.
 
\end{itemize}

The following is probably well known to experts, but it is difficult to find the right reference
(for example, \cite[Proposition 1.4]{Longo97} is proved for M\"obius covariance).
We note that a systematic formulation, closer to our needs, is to appear in \cite{DG16}.
Nevertheless, in part because we deal with multiplier representations,
and in part for better readability, we include a formal statement with a proof.

\begin{proposition}[Tensoriality of cocycles]
 It holds that $z_q(\g)\s_q(z_{q'}(\g)) = z_{q+q'}(\g)$.
\end{proposition}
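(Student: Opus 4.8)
The plan is to verify the cocycle identity $z_q(\g)\s_q(z_{q'}(\g)) = z_{q+q'}(\g)$ directly from the definition $z_q(\g) := U(\g)U_q(\g)^*$, using the composition rule $\s_q\circ\s_{q'} = \s_{q+q'}$ together with the covariance relation $\Ad U_q(\g)\circ\s_q = \s_q\circ\Ad U(\g)$ for the selected representatives $U_q(\g)$. First I would rewrite $\s_q(z_{q'}(\g)) = \s_q\bigl(U(\g)U_{q'}(\g)^*\bigr)$. The key observation is that the covariance relation can be read as an intertwining property of the operators $U_q(\g)$: applying $\s_q$ to any $x$ and then conjugating by $U_q(\g)$ is the same as first applying $\Ad U(\g)$ and then $\s_q$. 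The difficulty is that $z_{q'}(\g)$ need not lie in a single local algebra on which $\s_q$ acts as a clean endomorphism with a ready-made formula; however, as noted in the preliminaries, a charge transporter belongs to $\A((I\cup\tilde I)')'$ and hence can be treated as an element of a local algebra, so $\s_q$ applies to it and I can manipulate it with the covariance relation.

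The core computation proceeds as follows. Since $U_q(\g)$ implements $\s_q\circ\Ad U(\g) = \Ad U_q(\g)\circ\s_q$, one expects the heuristic relation $\s_q(U(\g)) = U_q(\g)$ (this is precisely the definition $U_q(\g) = \s_q(U(\g))$ given in the list of ingredients). I would exploit exactly this: by definition $U_q(\g) = \s_q(U(\g))$, so
\begin{equation*}
\s_q(z_{q'}(\g)) = \s_q\bigl(U(\g)\bigr)\,\s_q\bigl(U_{q'}(\g)^*\bigr) = U_q(\g)\,\s_q\bigl(U_{q'}(\g)\bigr)^*.
\end{equation*}
Now I must identify $\s_q(U_{q'}(\g))$. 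Since $U_{q'}(\g) = \s_{q'}(U(\g))$, I get $\s_q(U_{q'}(\g)) = \s_q\s_{q'}(U(\g)) = \s_{q+q'}(U(\g)) = U_{q+q'}(\g)$, using the composition law $\s_q\circ\s_{q'} = \s_{q+q'}$. Substituting back,
\begin{equation*}
z_q(\g)\,\s_q(z_{q'}(\g)) = U(\g)U_q(\g)^*\,U_q(\g)\,U_{q+q'}(\g)^* = U(\g)U_{q+q'}(\g)^* = z_{q+q'}(\g),
\end{equation*}
which is exactly the desired identity.

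The main obstacle I anticipate is bookkeeping with the phases of the multiplier representations. The operators $U_q(\g)$ are fixed only up to the scalar cocycle $c(\g_1,\g_2)$, and although the excerpt states that $c$ can be chosen independently of $q$, one must check that the manipulation $\s_q(U_{q'}(\g)) = U_{q+q'}(\g)$ holds on the nose (not merely up to a $q$-dependent phase) for the chosen representatives. I would therefore carefully track that the phase conventions fixed via \cite[Proposition 5.1]{FH} and the $q$-independence of $c$ guarantee that the scalars cancel exactly in the products $U_q(\g)^*U_q(\g)$ and in the two applications of $\s_q$, so that the genuine (non-projective) equality survives. A secondary technical point to confirm is that $\s_q$ is applied to operators living in a local algebra large enough to contain the charge transporter $z_{q'}(\g)$, which is justified by the localization remarks recalled above.
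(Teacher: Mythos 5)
Your pivotal step --- $\s_q(U(\g)) = U_q(\g)$, and hence $\s_q(U_{q'}(\g)) = \s_q\s_{q'}(U(\g)) = U_{q+q'}(\g)$ --- is not available in the generality and exactness that the proposition requires, and this is a genuine gap, not a bookkeeping issue. First, $\s_q$ is a representation of the \emph{net}: it is defined only on the local algebras $\A(I)$. The formula $U_q(\g) := \s_q(U(\g))$ in the preliminaries is stated, and only makes sense, for \emph{local} diffeomorphisms (for which $U(\g) \in \A(I)$), whereas the proposition is invoked in Section \ref{non-split} for arbitrary $(\g_+,\g_-) \in \widetilde{\diff}\times\widetilde{\diff}$, including non-local elements such as rotations and elements of $\widetilde{\mob}$, which lie in no local algebra. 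For such $\g$ the operator $U_q(\g)$ is produced by the abstract extension results of \cite{DFK04} and \cite{FH}, not by applying $\s_q$ to anything, so both of your key identities are literally undefined exactly where they are needed. Second, even for local $\g$ the relation $U_q(\g) = \s_q(U(\g))$ is a priori only projective: once the phases of all the $U_q$ have been fixed (via \cite{FH}) so as to obtain multiplier representations with a common, $q$-independent cocycle $c$, there is no formula telling you that the chosen representative $U_q(\g)$ equals $\s_q(U(\g))$ on the nose rather than up to a $\g$- and $q$-dependent phase. Your closing paragraph correctly identifies this as the obstacle, but proposing to ``carefully track'' the phases is not an argument --- the phases on non-local elements are given only by an existence statement, and eliminating precisely this potential phase \emph{is} the content of the proposition.

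The paper closes this gap with an idea absent from your proposal. One first observes that $z_q(\g)\s_q(z_{q'}(\g))$ and $z_{q+q'}(\g)$ both intertwine the irreducible representations $\s_{q+q'}$ and $\alpha_\g\s_{q+q'}\alpha_{\g^{-1}}$, hence differ by a scalar $\lambda(\g)$. The essential step is then to kill this scalar: a direct computation, using only the covariance relation $\Ad U_q(\g)\circ\s_q = \s_q\circ\Ad U(\g)$ and the fact that $U$ and all the $U_q$ share the same $2$-cocycle $c$, shows that $U'(\g) := \left(z_q(\g)\s_q(z_{q'}(\g))\right)^*U(\g)$ is again a multiplier representation of $\widetilde{\diff}$ with cocycle $c$. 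Consequently $\g\mapsto\lambda(\g)\1 = U'(\g)^*U_{q+q'}(\g)$ is a $\CC$-valued \emph{true} representation, i.e.\! a character, of $\widetilde{\diff}$; since $\widetilde{\diff}$ admits no nontrivial character (it is a perfect group), $\lambda \equiv 1$. To rescue your computational route you would need both a proof that your phase conventions give exact (not projective) equality $U_q(\g)=\s_q(U(\g))$ for local $\g$, and a separate argument --- say, fragmentation of a general diffeomorphism into local ones combined with the cocycle identity for $\g\mapsto z_q(\g)$ --- to pass from local elements to all of $\widetilde{\diff}$; neither appears in your proposal, and the character-triviality argument is what replaces them in the paper.
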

\begin{proof}
 First recall that $z_q(\g)$ is an intertwiner between $\s_q$ and $\alpha_\g\s_q\alpha_{\g^{-1}}$,
 hence the product $z_q(\g)\s_q(z_{q'}(\g))$ is an intertwiner between
 $\s_q\s_{q'} = \s_{q+q'}$ and $\alpha_\g\s_q\alpha_{\g^{-1}}\circ \alpha_\g\s_{q'}\alpha_{\g^{-1}} = \alpha_\g\s_{q+q'}\alpha_{\g^{-1}}$.
 $z_{q+q'}(\g)$ also intertwines  $\s_{q+q'}$ and $\alpha_\g\s_{q+q'}\alpha_{\g^{-1}}$.
 As they are automorphisms, hence irreducible, the difference between $z_q(\g)\s_q(z_{q'}(\g))$ and $z_{q+q'}(\g)$
 must be a scalar.
 
 Next we show that $U'_{q+q'}(\g) := (z_q(\g)\s_q(z_{q'}(\g)))^*U(\g)$
 is a multiplier representation of $\widetilde{\diff}$ such that
 $U_{q+q'}'(\g_1)U_{q+q'}'(\g_2) = c(\g_1,\g_2)U_{q+q'}'(\g_1\g_2)$, namely
 it has the same $2$-cocycle $c$ as $U_{q+q'}$.
 Indeed,
 \begin{align*}
  U'_{q+q'}(\g_1)U'_{q+q'}(\g_2) &= (z_q(\g_1)\s_q(z_{q'}(\g_1)))^*U(\g_1) (z_q(\g_2)\s_q(z_{q'}(\g_2)))^*U(\g_2) \\
 &= \s_q(z_{q'}(\g_1))^*U_q(\g_1) \s_q(z_{q'}(\g_2))^*U_q(\g_2) \\
 &= \s_q(z_{q'}(\g_1))^* \s_q(\alpha_{\g_1}(z_{q'}(\g_2)))^*\cdot c(\g_1,\g_2)U_q(\g_1\g_2) \\
 &= \s_q(z_{q'}(\g_1)^* \alpha_{\g_1}(z_{q'}(\g_2))^*)\cdot c(\g_1,\g_2)U_q(\g_1\g_2) \\
 &= \s_q\left(U_{q'}(\g_1)U(\g_1)^* U(\g_1)U_{q'}(\g_2)U(\g_2)^*U(\g_1)^*\right)\cdot c(\g_1,\g_2)U_q(\g_1\g_2) \\
 &= \s_q(U_{q'}(\g_1\g_2) U(\g_1\g_2)^*)\cdot c(\g_1,\g_2)z_q(\g_1\g_2)^*U(\g_1\g_2) \\
 &= c(\g_1,\g_2)U_{q+q'}'(\g_1\g_2),
 \end{align*}
 where in the 3rd and 6th equalities we used that $U$ and $U_q$ share the same $2$-cocycle $c$.
 
 Now let us define $U''(\g) := U_q'(\g)^*U_q(\g)$.
 As the difference between $U_q'(\g)$ and $U_q(\g)$ is just a phase and they share the same $2$-cocycle $c$,
 it is easy to show that $U''$ is a $\CC$-valued true (with trivial multiplier) representation of $\widetilde{\diff}$.
 It is well-known that then $U''$ must be trivial, $U''(\g) = \1$. From this the claim immediately follows.
\end{proof}

Let $G$ be the quotient of $\widetilde\mob\times\widetilde\mob$ by the normal subgroup generated by
$(R_{2\pi}, R_{-2\pi})$, where $\widetilde\mob$ naturally includes the universal covering $\RR$ of the rotation
subgroup $S^1$ and $R_{2\pi}, R_{-2\pi}$ are the elements corresponding to $2\pi, -2\pi$ rotations, respectively.
We call $\RR \times S^1$ the Einstein cylinder $\cyl$,
where the Minkowski space is identified with a maximal square $(-\pi,\pi)\times (-\pi,\pi)$ (see \cite{BGL93})
\footnote{Here the segments $(-\pi, \pi)\times \{0\}$ and $\{0\}\times (-\pi,\pi)$ are identified with
the time and space axis, respectively.}.
The group $G$ acts naturally on it. Furthermore, 
let $\mathrm{Diff}(\RR)$ be the group of diffeomorphisms of $S^1$ which preserves the point of infinity $\infty$,
with the identification $S^1 = \RR \cup \{\infty\}$.
Then $\mathrm{Diff}(\RR)\times \mathrm{Diff}(\RR)$ acts naturally
on the Minkowski space as the product of two lightrays
\footnote{The lightray decomposition $\RR^2 = \RR\times \RR$ is not compatible with the above
identification of $\RR$ with $(-\pi,\pi)\times (-\pi,\pi)$, where the components correspond
to the time and space axis.}, and its action naturally extends to $\cyl$ by periodicity.
Let us denote by $\conf$ the group generated by $G$ and $\mathrm{Diff}(\RR)\times \mathrm{Diff}(\RR)$.
A two-dimensional conformal net $(\tilde \a, \tilde U, \tilde \Omega)$
consists of a family $\{\tilde \a(O)\}$ of von Neumann algebras parametrized by double cones $\{O\}$ in the Minkowski space $\RR^2$,
a strongly-continuous unitary representation of $G$ which extends to a projective unitary
representation of $\conf$, and a vector $\tilde \Omega$ such that
the following axioms are satisfied \cite[Section 2]{KL04-2}:
\begin{itemize}
 \item {\sc Isotony}. If $O_1 \subset O_2$, then $\tilde \a(O_1)\subset \tilde\a(O_2)$.
 \item {\sc Locality}. If $O_1$ and $O_2$ are spacelike separated, then $\tilde\a(O_1)$ and $\tilde\a(O_2)$ commute.
 \item {\sc Covariance}. For a double cone $O$, it holds that $\Ad \tilde U(\g)(\tilde \a(O)) = \tilde \a(\g O)$
 for $\g \in \mathcal{V} \subset \conf$, where $\mathcal{V}$ is a neighborhood of the unit element of $\conf$
 such that $\g O \subset \RR^2$ for $\g\in\mathcal{V}$.
 For $x \in \tilde\a(O)$ and if $\g \in \mathrm{Diff}(\RR)\times\mathrm{Diff}(\RR)$ acts identically on $O$, then $\Ad \tilde U(\g)(x) = x$.
 \item {\sc Existence and uniqueness of vacuum}. $\tilde\Omega$ is a unique (up to a scalar) invariant vector
 for $\tilde U|_G$.
 \item {\sc Cyclicity}. $\tilde\Omega$ is cyclic for $\bigvee_{O \subset \RR^2} \tilde\a(O)$.
 \item {\sc Positivity of energy}. The restriction of $\tilde U$ to the group of translations has the spectrum
 contained in $V_+ := \{(x_0,x_1): x_0 \ge |x_1|\}$.
\end{itemize}

Now we construct a two-dimensional conformal net as follows, following the ideas of \cite{DR90, LR95}.
Let us fix an interval $I \subset \RR \subset S^1$ and a real smooth function $\varphi$
as above. On the Hilbert space $\h_q = \h$, we take the automorphism $\s_q$ of the
$\mathrm{U}(1)$-current net $\A$.
The full Hilbert space is the separable direct sum $\tilde\h = \bigoplus_{q\in\QQ} \h_q\otimes\h_q$.
The observable net $\A\otimes\A$ acts on $\tilde\h$ as the direct sum $\tilde\s(x\otimes y) = \bigoplus_q \s_q(x)\otimes\s_q(y)$.
We can also define a multiplier representation of $\widetilde{\diff}\times\widetilde{\diff}$ by
$\tilde U(\g_+,\g_-) := \bigoplus_q U_q(\g_+)\otimes U_q(\g_-)$.
The representation $\tilde U$ actually factors through $\conf$.
This can be seen by noting that in each component $U_q\otimes U_q$ the generator of spacelike
rotations is $L_0^{\s_q}\otimes\1-\1\otimes L_0^{\s_q}$ whose spectrum is included in $\ZZ$,
since the spectrum of $L_0^{\s_q}$ is included in $\NN + \frac{q^2}2$.

As all the components are the same $\h_q\otimes\h_q = \h\otimes\h$, the shift operators $\{\psi^q\}$ (``fields'') act
naturally on $\tilde\h$: for $\Psi \in \tilde\h$, where $(\Psi)_q \in \h_q\otimes\h_q$, 
\begin{equation*}
 (\psi^{q'}\Psi)_q = (\Psi)_{q+q'}.
\end{equation*}
It is useful to note how they behave under covariance:
\begin{align*}
 (\Ad \tilde U(\g_+,\g_-)(\psi^{q'})\Psi)_q &= U_q(\g_+)\otimes U_q(\g_-)(\psi^{q'}\cdot \tilde U(\g_+, \g_-)^*\Psi)_q \\
 &= U_q(\g_+)\otimes U_q(\g_-)(\tilde U(\g_+, \g_-)^*\Psi)_{q+q'} \\
 &= \left(U_q(\g_+)\otimes U_q(\g_-)\right)\cdot \left(U_{q+q'}(\g_+)^*\otimes U_{q+q'}(\g_-)^* \right)(\Psi)_{q+q'} \\
 &= (z_q(\g_+)^*z_{q+q'}(\g_+))\otimes (z_q(\g_-)^*z_{q+q'}(\g_-))(\Psi)_{q+q'} \\
 &= \left(\s_q(z_{q'}(\g_+)))\otimes (\s_q(z_{q'}(\g_-))\right)(\Psi)_{q+q'} \\
 &= (\tilde\s(z_{q'}(\g_+)\otimes z_{q'}(\g_-))\psi^{q'}\Psi)_q 
\end{align*}
where we used tensoriality of cocycles in the 5th equality.

We define the local algebra, first for $I\times I \subset \RR \times \RR \subset \RR^2$,
where the real lines are identified with the lightrays $x_0 \pm x_1 = 0$, by
\[
 \tilde\A(I\times I) = \{\tilde\s(x\otimes y), \psi^q: x,y\in\A(I), q\in \QQ\}'',
\]
and for other bounded regions by covariance: take $\g_\pm \in \mathrm{Diff}(\RR)$ such that $\g_\pm I = I_\pm$ and
\[
 \tilde\A(I_+\times I_-) = \Ad \tilde U(\g_+, \g_-)(\A(I\times I)).
\]
This does not depend on the choice of $\g_\pm$. Indeed, if $\g_\pm$ preserves $I$, then
$z_{q'}(\g_+)\otimes z_{q'}(\g_-) \in \A(I)\otimes \A(I)$ and 
$\Ad \tilde U(\g_+, \g_-)(\psi^{q'}) \in \tilde\A(I\times I)$ by above computation.
We set $\tilde \Omega = \Omega \otimes\Omega \in \h_0\otimes\h_0 \subset \tilde \h$.

\begin{itemize}
 \item Covariance. $\Ad \tilde U(\g_+,\g_-)(\tilde \a(O)) = \tilde \a((\g_+,\g_-)\cdot O)$ holds by definition.
 If $(\g_+,\g_-) \in \mathrm{Diff}(\RR)\times\mathrm{Diff}(\RR)$ acts trivially on $I\times I$,
 then $\tilde U(\g_+,\g_-) = \tilde\s(U(\g_+)\otimes U(\g_-))$ and this commutes with $\tilde\a(I\times I)$,
 as $\supp \g_\pm$ are disjoint from $I$.
 \item Isotony. By covariance, we may assume that $I_\pm \supset I$.
 Take $\g_\pm$ such that $\g_\pm I = I_\pm$. From the expression
 \[
 \Ad \tilde U(\g_+, \g_-)(\psi^{q'}) =  (\tilde\s(z_{q'}(\g_+)\otimes z_{q'}(\g_-))\psi^{q'}\Psi)_q 
 \]
 and from the fact that $z_{q'}(\g_\pm) \in \A(I_\pm)$, the isotony follows.
 \item Positivity of energy. Each component $U_q\otimes U_q$ has positive energy.
 \item Existence and uniqueness of the vacuum. Only $U_0\otimes U_0$ contains the vacuum vector.
 \item Cyclicity. The fields $\psi^q$ brings $\h_0\otimes \h_0$ to any $\h_q \otimes \h_q$, while
 the local algebra $\tilde\s(\A(I)\otimes\A(I))$ acts irreducibly on each $\h_q \otimes \h_q$. 
 \item Locality. In the two-dimensional situation, the spacelike separation of $I\times I$ and $I_+\times I_-$
 means either $I_+$ sits on the left of $I$ and $I_-$ on the right, or vice versa. We may assume the former case,
 as the latter is parallel.

 The commutativity between the observables $\tilde\s(x\otimes y)$ is trivial.
 As for the observables and the fields $\{\psi^q\}$, if $x, y \in \A(I_\pm)$ respectively, as $I_\pm$ are disjoint from $I$ and $\s_q$ are localized in $I$, we have
 $\tilde\s(x\otimes y) = \bigoplus_q x\otimes y$ and this commutes with shifts $\psi^q$.
 Finally, we need to check the commutativity between fields $\psi^{q_1}, \Ad U(\g_+)\otimes U(\g_-)(\psi^{q_2})$,
 where $\g_\pm I = I_\pm$.
 We can compute the commutator explicitly:
 \begin{align*}
  & ([\psi^{q_1}, (\Ad \tilde U(\g_+)\otimes \tilde U(\g_-)(\psi^{q_2})]\Psi)_q \\
 &= (\psi^{q_1} \tilde\s(z_{q_2}(\g_+)\otimes z_{q_2}(\g_-))\psi^{q_2}\Psi - \tilde\s(z_{q_2}(\g_+)\otimes z_{q_2}(\g_-))\psi^{q_2} \psi^{q_1}\Psi)_q \\  
 &= (\tilde\s(\s_{q_1}(z_{q_2}(\g_+))\otimes \s_{q_1}(z_{q_2}(\g_-)))\psi^{q_1+q_2}\Psi - \tilde\s(z_{q_2}(\g_+)\otimes z_{q_2}(\g_-))\psi^{q_1+q_2}\Psi)_q, 
 \end{align*}
 and this vanishes because $z_{q_2}(\g_+)^*\s_{q_1}(z_{q_2}(\g_+))\otimes z_{q_2}(\g_-)^*\s_{q_1}(z_{q_2}(\g_-))
 = \epsilon_{q_1,q_2}^+\otimes \epsilon_{q_1,q_2}^- = \1$,
 as the braidings $\epsilon_{q_1,q_2}^\pm$ are scalar and conjugate to each other. 
\end{itemize}


We are going to show that $\tilde \a$ does not satisfy the split property. 
First of all by construction, the net $\tilde \a$ satisfies Bisognano and Wichmann property.
Let $q\in\mathbb Q$, $W=\RR^+\times\RR^-$ be a wedge region, by the identification of the Connes-Radon-Nykodym cocycle
with the geometric cocycle \cite[Theorem 2.4]{Longo97}, we get that
\[
 \left.\Delta_{\tilde\a(W),\tilde\Omega}^{-it}\right|_{\h_q\otimes\h_q} = U^q(\Lambda_W(2\pi t))=\Delta_{\a(W),\xi_q,\Omega}^{-it}
\]
where $\xi_q$ is the vector in $\h_q\otimes\h_q$
representing $\phi_q(\cdot)=\omega\circ\sigma^{-1}_q$ on $\a(W) $ and $\Delta_{\a(W),\xi_q,\Omega}$ is the positive part in the polar decomposition of closure of the {\bf relative Tomita operator}
\[
 S_{\a(W),\xi_q,\Omega}:\a(W)\Omega\ni a\Omega\longmapsto a^*\xi_q\in\a(W)\xi_q.
\]
More on relative modular Tomita operators can be found in \cite{BR1}.
Note that $\phi_q \to \omega$ in norm as $q \to 0$,
since $\s_q$ are locally implemented by $W(q\varphi_1)$ for some smooth function $\varphi_1$
which tend to $\1$ strongly as $q\to 0$,
and $\phi_q = \<W(q\varphi_1)^*\Omega, \cdot\, W(q\varphi_1)^*\Omega\>$.

A necessary condition for the split property of the net $\tilde\a$ is the compactness of the completely positive map
\[
  X_0:\tilde\a(O)\ni \tilde a\mapsto\Delta^{\frac14}_{\tilde\a(W),\tilde\Omega}\tilde a\tilde\Omega\in\tilde \h 
\]
where $O\Subset W$ is a double cone \cite[Propositions 1.1 and 2.3]{BDL90}.
We will show that the map is not compact by finding a sequence of orthogonal vectors
whose norms are all bounded below in $X_0(\a(O)_1)$, i.e. in the image of the unit ball  $\a(O)_1$ by $X_0$.

We now make some general remarks on maps involving relative modular operators
(with notations that will be then suitable to our application).
Let $\M\subset\b(\K)$ be a von Neumann algebra with $\Omega\in\K$ cyclic and separating vector.
From the normal states $\{\phi_q\}$ on $\M$, we take a sequence
converging in norm to $\omega(\cdot)=\langle\Omega,\cdot\,\Omega\rangle$:
conveniently, we shall index this sequence by $\frac{1}{n}$ rather than $n$;
i.e.\! its terms are $\phi_{\frac{1}{n}}$.
As $\M$ is in the standard form, we can find vectors in the natural cone ${\xi_\frac1n}\in\P^\natural(\M,\Omega)$, such that
$\phi_\frac1n(\cdot)=\langle \xi_\frac1n,\cdot\,\xi_\frac1n\rangle$. By convergence of $\phi_\frac1n$, $\xi_\frac1n$ converges in norm to $\Omega$.
We define the maps
\[
X_\frac1n:\M\ni a\mapsto\Delta^{\frac14}_{\M,\xi_\frac1n,\Omega}a\Omega\in\K
\]
\begin{lemma}
$X_\frac1n$ are bounded, *-strongly continuous maps for any $n\in\NN$.
\end{lemma}
\begin{proof}The thesis follows from the fact that for any $a\in\M_1$, we get
\begin{align*} 
\|\Delta^{\frac14}_{\M,\xi_\frac1n,\Omega}a\Omega\|^2 &=\langle \Delta^{\frac14}_{\M,\xi_\frac1n,\Omega}a\Omega,\Delta^{\frac14}_{\M,\xi_\frac1n,\Omega}a\Omega\rangle\\
& = \langle S_{\M,\xi,\Omega} a\Omega, J_{\M,\Omega} a\Omega\rangle \\
&= \langle a^*\xi_\frac1n,J_{\M,\Omega}a\Omega\rangle\\
& \leq \|a^*\xi_\frac1n\|^2+\|a\Omega\|^2,
\end{align*}
where $J_{\M,\Omega}$ is the modular conjugation of $\M$ with respect to $\Omega$.

\end{proof}

Now,  consider the GNS representation of $\MM_2(\CC)$ matrices with respect to the trace state $\Tr$.
$\MM_2(\CC)$ acts on the four dimensional Hilbert space $\CC^4$, where we can fix an orthonormal basis $\{e_{jk}\}_{j,k=1,2}$.
We  define the von Neumann algebra $\widehat \M\dot=\M\otimes \MM_2(\CC)$ acting on  $\widehat\K\dot= \K\otimes \CC^4$.
Consider the vector state $\nu=\Omega\otimes e_{11}+\xi\otimes e_{22}$ where $\Omega$ and $ \xi$
are cyclic and separating vectors for $\M$.  Then $\nu$ is cyclic and separating for $\widehat \M$ and the Tomita operator $S_{\widehat \M,\nu}$ has the following form:
$$S_{\widehat \M,\nu}=U_{11}S_\Omega U_{11}^*+U_{21}S_{\xi,\Omega}U_{12}^*+U_{12}S_{\Omega,\xi}U_{21}^*+U_{22}S_{\xi} U_{22}^*.$$
$S_{\widehat \M,\nu}$ has polar decomposition $S_{\widehat M,\nu}=J_{\widehat \M,\nu}\Delta^{1/2}_{\widehat \M,\nu}$  where
$$\Delta_{\widehat \M,\nu}=U_{11}\Delta_\Omega U_{11}^*+U_{21}\Delta_{\Omega,\xi}U_{21}^*+U_{12}\Delta_{\xi,\Omega}U_{12}^*+U_{22}\Delta_{\xi,\xi} U_{22}^*$$
and
$$J_{\widehat \M,\nu}=U_{11}J_\Omega U_{11}^*+U_{21}J_{\Omega,\xi}U_{12}^*+U_{12}J_{\xi,\Omega}U_{21}^*+U_{22}J_{\xi,\xi} U_{22}^*$$
where $U_{jk}:\K\to\widehat\K$ such that  $U_{jk}\,\eta=\eta\otimes e_{jk}$ with $j,k=1,2$. See \cite{BR1} for further details.

Convergence of linear functionals $\{\phi_\frac1n\}$ implies convergence of the maps $X_\frac1n$ to
$$X_0:\M\ni a\mapsto\Delta^{\frac14}_{\M,\Omega}a\Omega\in\K.$$

\begin{lemma} \label{lem:cont}
$X_\frac1n$ converges in norm to $X_0$ as $n \to \infty$.
\end{lemma}
\begin{proof}
Consider the von Neumann algebras $\widehat \M$ and the vector states on $\M$ implemented by $\widehat\xi_\frac1n\equiv\Omega\otimes e_{11}+\xi_\frac1n\otimes e_{22}$.  As $n$ goes to infinity, $\widehat\xi_\frac1n$ converges in norm to $\widehat\Omega\equiv\Omega\otimes  e_{11}+\Omega\otimes e_{22}$.

 By Lemma 2.7 in \cite{BDL90}, the maps 
$$\widehat X_\frac1n:\widehat\M\ni a\mapsto\Delta^{\frac14}_{\widehat \M,\widehat \xi_\frac1n}a\widehat\xi_\frac1n\in\K,$$
converge in norm to 
$$\widehat X_0:\widehat\M\ni a\mapsto\Delta^{\frac14}_{\widehat\M,\widehat\Omega}a\widehat\Omega\in\K.$$
 Note that the restrictions  $\widehat X_\frac1n|_{\M\otimes e_{1,2}}$ and $\widehat X_0|_{\M\otimes e_{1,2}}$ coincide with $X_\frac1n$ and $X_0$.
 In particular, $\widehat X_\frac1n|_{\M\otimes e_{1,2}}$ converges in norm to $\widehat X_0|_{\M\otimes e_{1,2}}$,
 which is the convergence of $X_\frac1n$ to $X_0$.
\end{proof}

In our case, let $\K = \h\otimes\h, \M=\a(W)=\a(\RR^+)\otimes\a(\RR^-)$ and $O$ be a double cone containing the charge localization, i.e.\! $O\Supset I\times I$. By Lemma \ref{lem:cont}, we learn that 
$$X_\frac1n:\a(O)\ni a\mapsto\Delta^{\frac14}_{\a(W),\xi_\frac1n,\Omega}a\Omega\in\h\otimes\h$$ 
converges in norm to 
$$X_{0}:\a(O)\ni a\mapsto\Delta^{\frac14}_{\a(W),\Omega}a\Omega\in\h\otimes\h.$$
By the Bisognano-Wichmann property, $\Omega$ is the unique eigenvector of $\Delta_{\a(W),\Omega}$ with the eigenvalue $0$,
hence we can find  $a\in\A(O)_1\cap \a(I\times I)'$ such that $\|\Delta^{\frac14}_{\a(W),\Omega}a\Omega\|>0$ and
for such $a$ it holds that 
\[
 \|\Delta^{\frac14}_{\a(W),\xi_\frac1n,\Omega}a\Omega- \Delta^{\frac14}_{\a(W),\Omega}a\Omega\|\stackrel{n\rightarrow 0}\rightarrow 0.
\]
In particular, the sequence has a lower, non zero  norm-bound, i.e.\!
\[
 \liminf_{n\in\NN}\|\Delta^{\frac14}_{\a(W),\xi_\frac1n,\Omega}a\Omega\|>0.
\]
The only non-zero component of the vector $\Delta^{\frac14}_{\tilde\a(W),\tilde\Omega}\tilde\s(a)\psi^{\frac1n}\tilde\Omega$
is the $\h_\frac1n\otimes\h_\frac1n$-component.
Therefore, the following sequence
$\left\{\Delta^{\frac14}_{\tilde\a(W),\tilde\Omega}\tilde\s(a)\psi^{\frac1n}\tilde\Omega\right\}_{n\in\NN}$
with
\[
 \left(\Delta^{\frac14}_{\tilde\a(W),\tilde\Omega}\tilde\s(a)\psi^{\frac1n}\tilde\Omega \right)_\frac1n
 \;=\; \Delta^{\frac14}_{\a(W),\xi_\frac1n,\Omega}a\Omega
 \;\in\; \h_\frac1n\otimes\h_\frac1n \subset \tilde\h
\]
is an orthogonal family of vectors in $X_0(\tilde\a(O)_1)$ with whose norms are uniformly bounded below
by a positive constant (note that we used that $\tilde\s(a) = \bigoplus_{q\in\QQ} a$,
since $a\in\a(O)_1\cap \a(I\times I)'$). Summarizing, we got a sequence of non-convergent
vectors in $X_0(\tilde\a(O)_1)$ and this makes the split property fail.

The work \cite{BD95} gives a sufficient condition for an extension of a split net to be split,
but it does not apply to our situation.

\begin{remark}
 We choose $\QQ$ as the index set because we wanted to have a counterexample on a separable
 Hilbert space.
 The whole construction can be repeated by replacing $\QQ$ by $\RR$,
 and one obtains a two-dimensional conformal net on a non-separable Hilbert space.
 Although we do not attempt to prove it, this case appears to be equivalent to
 the construction of \cite[Section 4]{Ciolli09}, where the same chiral algebra and
 superselection sectors appear.
 
 A two-dimensional Haag-Kastler net on a non-separable Hilbert space cannot satisfy
 the split property: by the Reeh-Schlieder property, an intermediate type I factor must be
 $\s$-finite, while a type I factor is $\s$-finite if and only if it is isomorphic to $\b(\mathcal{K})$
 where $\mathcal{K}$ is separable. If the split property holds, there must be an increasing sequence of
 type I factors which generate the whole $\b(\h)$, which is possible only if
 one of them is isomorphic to $\b(\h)$ (by considering the cardinality), hence $\h$ must be separable.
\end{remark}

\section{Outlook}\label{outlook}

In general, a standard technique to prove the split property
is to verify certain nuclearity conditions for the dynamics. In the M\"obius covariant case, the most handy one
is the trace class condition of the conformal Hamiltonian $e^{-\beta L_0}$ \cite{BDL}.
The split property in turn implies certain compactness conditions \cite{BDL90}.
With our result, one is lead to conjecture that the trace class property should be also automatic.

The existence of an intermediate type I factor does not depend on the sector. Assume $\a$ to be a M\"obius covariant net satisfying split property (for instance $\a$ is a conformal net) and $I_1\subset I_2$ an inclusion of intervals with no common end points. Any representation $\pi$ of $\a$ is a family of local algebra faithful isomorphisms onto their image, as any local algebra is a factor. Then an intermediate type I factor $\a(I_1)\subset \R \subset \a(I_2)$ is mapped through $\rho$ onto an intermediate type I factor $\rho_{I_2}(\a(I_1))\subset\rho_{I_2}(\R)\subset\rho_{I_2}(\a(I_2))$  as $\rho_{I_2}$ restricts to  an isomorphism of $\R$ on $\rho_{I_2}(\R)$. Furthermore, when $\rho$ is localizable, then $\rho_{I_1}(\a(I_1))\subset\rho_{I_2}(\a(I_2))$ is a standard split inclusion acting on a separable Hilbert space (we can unitarily identify the Hilbert spaces). 
At this point it is also natural to expect that the trace class property of $L_0^\rho$ in irreducible or factorial sectors
should be automatic.
While the split property has important implications in algebraic QFT,
it is almost never seen in other approaches to CFT, such as vertex operator algebras (VOAs).
On the other hand, the trace class property, or even the finite-dimensionality of the eigenspaces of $L_0$
would be useful for the study of VOAs.

\subsubsection*{Acknowledgment}
We would like to thank James Tener for calling our attention to the article 
\cite{Neretin} of Neretin, which led us to find \cite{Ols} and allowed us to bridge the gap in the concept of the 
proof we previously had. 
We are grateful to Marcel Bischoff, Sebastiano Carpi, Fabio Ciolli, Luca Giorgetti and Roberto Longo
for various interesting discussions on two-dimensional CFT.

{\small

}

\end{document}